\definecolor{DarkGreen}{rgb}{0.1,0.5,0.1}
\definecolor{DarkRed}{rgb}{0.5,0.1,0.1}
\definecolor{DarkBlue}{rgb}{0.1,0.1,0.5}
\definecolor{Gray}{rgb}{0.2,0.2,0.2}
\def\maketag@@@#1{\hbox{\m@th\normalfont\normalsize#1}}
  \let\Cref\crtCref
  \let\cref\crtcref
\crefname{ALC@line}{line}{lines}    
\Crefname{ALC@line}{Line}{Lines}    
\crefname{assumption}{Assump.}{Assumps.}
\crefname{theorem}{Thm.}{Thms.}
\crefname{proposition}{Prop.}{Props.}
\crefname{lemma}{Lem.}{Lems.}
\crefname{equation}{Eq.}{Eqs.}
\crefname{section}{Sec.}{Secs.}
\crefname{appendix}{App.}{Apps.}
\crefname{table}{Tab.}{Tabs.}
\crefname{figure}{Fig.}{Figs.}
\crefname{example}{Ex.}{Exs.}
\crefname{corollary}{Cor.}{Cors.}
\crefname{algorithm}{Alg.}{Algs.}
\crefname{definition}{Def.}{Defs.}
\title{The Limits of Interval-Regulated Price Discrimination}
\newcommand*\samethanks[1][\value{footnote}]{\footnotemark[#1]}
\author{Kamesh Munagala\thanks{Department of Computer Science, Duke University, Durham, NC 27708-0129. Emails: \texttt{kamesh@cs.duke.edu}, \texttt{yiheng.shen@duke.edu}. Supported by NSF award IIS-2402823.} \and Yiheng Shen\samethanks[1] \and Renzhe Xu\thanks{Key Laboratory of Interdisciplinary Research of Computation and Economics, Shanghai University of Finance and Economics, China. Email: \texttt{xurenzhe@sufe.edu.cn}. Supported by the National Natural Science Foundation of China (No. 72442024) and the Shanghai Sailing Program (No. 24YF2711600). Part of this work was done while the author was visiting Duke University.}}
\date{}
\newtheorem{theorem}{Theorem}[section]
\newtheorem{lemma}[theorem]{Lemma}
\newtheorem{corollary}[theorem]{Corollary}
\newtheorem{proposition}[theorem]{Proposition}
\theoremstyle{definition}
\newtheorem{definition}{Definition}[section]
\newtheorem{example}{Example}[section]
\newtheorem{property}{Property}[section]
\theoremstyle{remark}
\newcommand{\R}{\mathbb{R}}
\newcommand{\X}{\mathcal{X}}
\newcommand{\A}{\mathcal{A}}
\newcommand{\Z}{\mathcal{Z}}
\newcommand{\bbI}{\mathbbm{1}}
\newcommand{\ba}{\mathbf{a}}
\newcommand{\bb}{\mathbf{b}}
\newcommand{\bx}{\mathbf{x}}
\newcommand{\by}{\mathbf{y}}
\newcommand{\tF}{\tilde{F}}
\newcommand{\tB}{\tilde{B}}
\newcommand{\tbx}{\tilde{\bx}}
\newcommand{\tx}{\tilde{x}}
\newcommand{\tX}{\tilde{\X}}
\DeclareMathOperator*{\argmax}{arg\,max}
\newcommand{\CS}{\mathrm{CS}}
\newcommand{\PS}{\mathrm{PS}}
\newcommand{\SW}{\mathrm{SW}}
\newcommand{\cs}{\mathrm{cs}}
\newcommand{\ps}{\mathrm{ps}}
\newcommand{\sw}{\mathrm{sw}}
\newcommand{\remain}{\mathrm{remain}}
\newcommand{\uni}{\mathrm{uniform}}
\newcommand{\ui}{{\underline{i}}}
\newcommand{\fopt}{\mathrm{OptPrice}^{F}}
\newcommand{\ZAP}{\Z_{\mathrm{AP}}}
\newcommand{\ZAC}{\Z_{\mathrm{AC}}}
\newcommand{\ZPP}{\Z_{\mathrm{PP}}}
\newcommand{\ZPC}{\Z_{\mathrm{PC}}}
\newcommand{\ZPMin}{\Z_\mathrm{PMin}}
\newcommand{\ZAMin}{\Z_\mathrm{AMin}}
\newcommand{\uniform}{\mathrm{uniform}}
\DeclareMathOperator{\support}{\mathrm{Supp}}
\DeclareMathOperator{\optPrice}{\mathrm{OptPrice}}
\DeclareMathOperator{\ERM}{\mathrm{ERM}}
\DeclareMathOperator{\mass}{\mathrm{mass}}
\DeclareMathOperator{\equal}{\mathrm{eq}}
\newcommand{\standardization}{\mathrm{Standardize}}
\newcommand{\scheme}{market scheme\xspace}
\newcommand{\schemes}{market schemes\xspace}
\newcommand{\SCHEME}{Market Scheme\xspace}
\newcommand{\SCHEMES}{Market Schemes\xspace}
\newcommand{\Fvalid}{$F$-valid\xspace}
\newcommand{\standard}{standard-form and $F$-valid\xspace}
\newcommand{\ActiveCSMin}{\CS_A^{\min}}
\newcommand{\ActivePSMin}{\PS_A^{\min}}
\newcommand{\PassiveCSMin}{\CS_P^{\min}}
\newcommand{\PassiveSWMax}{\SW^{\max}}
\newcommand{\UniRev}{R_{\uniform}}
\begin{document}

\maketitle
\begin{abstract}
In this paper, we study third-degree price discrimination in a model first presented by \citet*{bergemann2015limits}. Since such price discrimination might create market segments with vastly different posted prices, we consider regulating these prices, specifically, by restricting them to lie within an interval. Given a price interval, we consider segmentations of the market where a seller, who is oblivious to the existence of such regulation, still posts prices within the price interval.

We show the following surprising result: For any market and price interval where such segmentation is feasible, there is always a different segmentation that optimally transfers all excess surplus to the consumers. In addition, we characterize the entire space of buyer and seller surplus that is achievable by such segmentation, including maximizing seller surplus, and simultaneously minimizing buyer and seller surplus.

A key technical challenge is that the classical segmentation method of \citet*{bergemann2015limits} fails under price constraints. To address this, we develop three intuitive but fundamentally distinct segmentation constructions, each tailored to a different surplus objective. These constructions maintain different invariants, reflect different economic intuitions, and collectively form the core of our regulated surplus characterization.

\end{abstract}

\clearpage

\makeatletter
\setcounter{tocdepth}{2}
\makeatother
\tableofcontents

\thispagestyle{empty}
\clearpage

\section{Introduction}

We investigate third-degree price discrimination in markets characterized by a seller with an unlimited supply of goods who aims to implement differential pricing strategies across various buyer sub-populations. Drawing on the seminal work by \citet{bergemann2015limits}, we focus on information intermediaries in these market structures. With access to rich consumer data and powerful machine learning tools, intermediaries can accurately infer buyer valuations and finely segment the buyer population. This segmentation allows the seller to tailor their pricing strategies to each sub-population. Such scenarios are especially common in modern two-sided e-commerce platforms—most notably in ad exchanges—where the platform acts as an intermediary between ad slot publishers (sellers) and advertisers (buyers), enabling optimized price discrimination through market segmentation.

In this context, consumer and producer benefits are quantified as consumer surplus and producer surplus, respectively. Consider a scenario where a buyer values the good at $v$ and is charged a price $p$. The purchase decision hinges on the condition $v \ge p$, with the consumer's benefit being $v - p$ and the seller's benefit being $p$. Consumer surplus and producer surplus are defined as the aggregate benefits accruing to buyers and sellers, respectively. Now, consider a market characterized by a set of buyers' valuations, $V = \{v_1, v_2, \dots, v_n\}$. The structure of the market is captured by the vector $\bx^* = (x^*_1, x^*_2, \dots, x^*_n) \in \R^n_{\ge 0}$, where $x^*_i$ denotes the number of buyers with a valuation of $v_i$. Without segmentation, the seller is constrained to applying uniform pricing, charging a single price $p$ across the market. Under these conditions, the consumer surplus is calculated as $\sum_{i=1}^n \bbI[v_i \ge p]\cdot x^*_i(v_i-p)$, and the producer surplus as $\sum_{i=1}^n \bbI[v_i \ge p] \cdot x^*_ip$. The seller, aiming to maximize revenue—synonymous with producer surplus—will select the optimal uniform price $p^*$.

\paragraph{Information Intermediary.}
An intermediary with full knowledge of consumer valuations can affect consumer and producer benefits by segmenting the market $\bx^*$ into $Q$ segments $\{\bx_1, \bx_2, \dots, \bx_Q\}$, where each $\bx_q \in \R^n_{\ge 0}$ and $\sum_{q=1}^{Q} \bx_q = \bx^*$, ensuring coverage of the whole market.
The seller is then permitted to set a single price $p_q$ for each market segment $\bx_q$.
Intuitively, market segmentation in third-degree price discrimination should benefit sellers, as it allows them to charge optimal prices tailored to different sub-populations, thereby increasing revenue.
However, the seminal work of \citet{bergemann2015limits} established a surprising ``consumer optimality'' result: an intermediary can segment the market such that the seller's revenue (or producer surplus) remains unchanged from uniform pricing, while all excess surplus is allocated to buyers.
Therefore, the sum of consumer surplus and producer surplus is equal to the social welfare, $\SW^{\max} = \sum_{i=1}^n x_i^*v_i$. In other words, the item always sells, while the seller's revenue is exactly that from uniform pricing! 

\begin{example} \label{ex:intro_four}
    Suppose the buyers have four possible values $V = \{1, 2, 3, 6\}$, and the aggregate market is given by $\bx^* = (0.36, 0.20, 0.18, 0.26)$. The optimal uniform price $p^*$ is $6$, yielding a producer surplus of $1.56$. However, the consumer surplus in this case is zero: buyers with valuations in $\{1, 2, 3\}$ do not purchase the good, and those with valuation $6$ obtain no surplus when the price equals their value. Now consider segmenting the market into four parts, with value distributions $(0.36, 0.12, 0.12, 0.12)$, $(0.00, 0.08, 0.06, 0.07)$, and $(0.00, 0.00, 0.00, 0.07)$, and associated prices $1$, $2$, and $6$, respectively. It is straightforward to verify that each price is optimal for its respective segment. The resulting producer surplus remains $1.56$, matching that under uniform pricing, while the consumer surplus increases to $1.30$. Moreover, the good is always sold in each segment, since the posted price exactly matches the lowest valuation among the buyers in that segment. This ensures maximal consumer surplus.
\end{example}

In addition, \citet{bergemann2015limits} characterize the full region of all possible consumer-producer surplus pairs achievable through different market segmentations, illustrated as the gray region in \cref{fig:summary}. Notably, each such pair corresponds to a feasible solution of a linear program~\citep{CummingsD0W20} and is thus computable in polynomial time. However, the linear program itself provides limited insight into whether consumer-optimal outcomes can be achieved or the structural properties of the full surplus region.

\paragraph{Regulation via Price Intervals.}
The seminal result of \citet{bergemann2015limits} highlights the striking flexibility of surplus division achievable through market segmentation by a fully informed intermediary. However, their model assumes that the intermediary can induce arbitrary segmentations without constraints, often resulting in highly heterogeneous pricing across segments. While theoretically appealing, such unconstrained pricing can be problematic in practice. From a platform’s perspective, large price disparities across consumer segments may lead to buyer envy and erode trust, as users perceive the pricing as arbitrary or unfair~\citep{haws2006dynamic,alderighi2022consumer}. From a broader societal standpoint, excessively high prices may exclude buyers and reduce market activity~\citep{mankiw1998principles}, while extremely low prices can undermine brand value~\citep{inderst2024price} or exert pressure on small businesses~\citep{besanko2014economics}.

To mitigate these concerns, many real-world regulatory bodies impose pricing constraints through price intervals. For example, EU Regulation 2018/1971 sets upper bounds on intra-EU communication charges. In the United States, several states—including California and New York—enforce price-gouging limits during emergencies, as codified in California Penal Code~\S396 and New York General Business Law~\S396-R. These constraints help reduce price dispersion, enhance consumer perceptions of fairness, and promote overall market stability.

In light of these practices, we study the effect of imposing interval constraints on price discrimination. Specifically, we model regulation by requiring that all prices lie within a contiguous subset $F = \{v_\ell, v_{\ell+1}, \dots, v_r\} \subseteq V$ of the value space. 
This modeling choice offers a tractable way to analyze regulatory impacts while capturing common policy instruments that limit price variability.


Our model admits both a design and a robust predictions interpretation. From a design perspective, it captures an intermediary who segments the market while deliberately limiting price variation—motivated by fairness, reputational concerns, or regulatory pressure. The interval constraint reflects a preference to avoid large disparities across segments. Alternatively, the model can be seen as a robust predictions exercise: in markets with price discrimination but relatively narrow observed price ranges~\citep{anania2014price,duch2021online}, what surplus divisions remain feasible? This perspective is particularly relevant when the intermediary’s flexibility is limited. More broadly, by limiting price dispersion, the model serves as a stylized proxy for reducing surplus or price variance---objectives discusses in prior work~\citep{XuZ00SX22,CohenEL22,yang2024fairness,AliLV20}.

We analyze two regimes reflecting different degrees of intermediary control over seller behavior. In the \emph{passive intermediary} setting, inspired by platforms such as local food delivery services, it is hard to enforce price regulations on sellers directly. However, the intermediary may influence seller behavior by recommending prices that lie in the seller’s optimal set. Thus, regulation is enforced indirectly by designing segmentations where the seller's revenue-maximizing price lies in the interval $F$. In contrast, the \emph{active intermediary} setting models platforms like Amazon or eBay, where the intermediary exerts direct control over pricing. In this case, the seller is restricted to choosing a price within $F$ for each segment, regardless of whether it maximizes revenue.




\subsection{Overview of Results}

\subsubsection{Passive Intermediary Model} \label{sec:passive_intro}
Our main results are for the more challenging passive intermediary model. Note that there could be price sets $F$ for which no segmentation is possible. We therefore focus on price sets $F$ where there is at least one feasible segmentation, and we call such price sets {\em feasible}. Now, like~\citet{bergemann2015limits}, we can ask: {\em Given feasible $F$, how much surplus can be transferred to the consumers via a feasible segmentation?} 

\paragraph{Feasibility with the Optimal Uniform Price Excluded.} We begin by examining the conditions under which a price set $F$ is feasible. Any set $F$ that includes an optimal uniform price is trivially feasible, as the intermediary can simply treat the whole market as one segment. However, this inclusion is not always necessary—the optimal uniform price(s) may be either strictly greater or strictly smaller than all values in $F$. We present an example for this scenario first.

\begin{example}[Continuation of \cref{ex:intro_four}] \label{ex:intro-four-F-outside}
    In the setting of \cref{ex:intro_four}, suppose the regulated price set is $F = \{2, 3\}$. The intermediary can segment the market into $(0.00, 0.20, 0.00, 0.09)$ and $(0.36, 0.00, 0.18, 0.17)$, with optimal prices $2$ and $3$, respectively—both within $F$. Thus, $F$ is feasible even though the uniform optimal price $p^* = 6$ lies outside $F$.
\end{example}



In \cref{sect:justification-feasibility}, we formally analyze cases where the regulated set excludes the optimal prices yet remains feasible. We first show that feasible regulated price sets can deviate significantly from the optimal uniform prices. We then derive sufficient conditions for feasibility and validate their prevalence through simulations under uniform distributions, showing that 20\% of sets excluding the optimal price remain feasible in over 60\% of cases.


A regulated set excluding the optimal uniform prices $p^*$ has significant practical implications. If $p^*$ exceeds all values in $F$, any segmentation that satisfies the regulation guarantees strictly higher consumer surplus than uniform pricing, as all buyers with valuations at least $p^*$ will purchase the good at a lower price within $F$. As shown in \cref{ex:intro_four}, with $F = \{2, 3\}$ and $p^* = 6$, buyers with valuation $6$ always purchase at $2$ or $3$, ensuring a positive surplus, whereas under uniform pricing, consumer surplus is zero. This contrasts with the unregulated setting in \citet{bergemann2015limits}, where minimal consumer surplus can be zero.  Furthermore, in \cref{app:final_discussion}, we explore how regulators can design $F$ to maximize worst-case consumer surplus, protecting consumers' benefits across all segmentations that comply with the regulation. Utilizing the techniques developed later, our result suggests designing $F$ with minimal endpoints $v_{\ell}$ and $v_r$, potentially excluding $p^*$.

\paragraph{Feasibility Implies Consumer Optimality.} 
For a feasible $F$, let $\min(F) = \min_{v \in F} v$. Clearly, if $v_i < \min(F)$, there is no segmentation where a buyer with value $v_i$ can buy the item. Therefore, the maximum attainable total surplus is $\SW^{\max}(\bx^*, F) = \sum_{i: v_i \ge \min(F)} x_i^*v_i$, where the item is always sold if the value of a buyer is at least $\min(F)$. Our main result is the following surprising consumer-optimality theorem, proved in \cref{sec:passive-inter}:

\begin{theorem}[Informal; see \cref{thrm:passive-overview}] \label{thm:informal_main} \label{thm:informal-main}
In the passive intermediary model, for any feasible $F$, there is a market segmentation where the seller revenue equals that of uniform pricing without regulation \footnote{The seller revenue in this model is always at least that of uniform pricing, since the seller can choose to ignore the segmentation.}, while the total surplus is precisely $\SW^{\max}(\bx^*, F)$, hence transferring all remaining surplus to the buyers.
\end{theorem}


Note that if $F$ includes the optimal uniform price $p^*$, the segmentation method of \citet{bergemann2015limits} can be directly applied to achieve the desired goal (see \cref{app:bbm}). While one might view \cref{thm:informal_main} as a straightforward extension of their result, this is not the case. Their approach crucially depends on $F$ containing $p^*$, which ensures that $p^*$ remains optimal when iteratively removing market segments. In our setting, however, this assumption may fail: $p^*$ can lie outside $F$, as shown in \cref{ex:intro-four-F-outside}, making their approach invalid. It is then unclear a priori why the theorem should still hold. Our key contribution is a novel segmentation scheme, based on a different set of invariants, that proves the result in this broader setting. The next example illustrates why the method of \citet{bergemann2015limits} fails here.

\begin{example}[Continuation of \cref{ex:intro_four}] \label{ex:intro_four_continue}
    If we directly apply the method of \citet{bergemann2015limits} to \cref{ex:intro_four} (see \cref{tab:construction-bbm}), one of the resulting market segments is $(0.00, 0.00, 0.00, 0.07)$, which has a unique optimal price of $6$—outside the regulated set $F$. Therefore, this segmentation violates the regulation constraints, as not all prices lie within $F$.
\end{example}


\paragraph{Producer-Consumer Surplus Region.} 
\begin{figure}[t]
    \centering
    \includegraphics[width=0.6\linewidth]{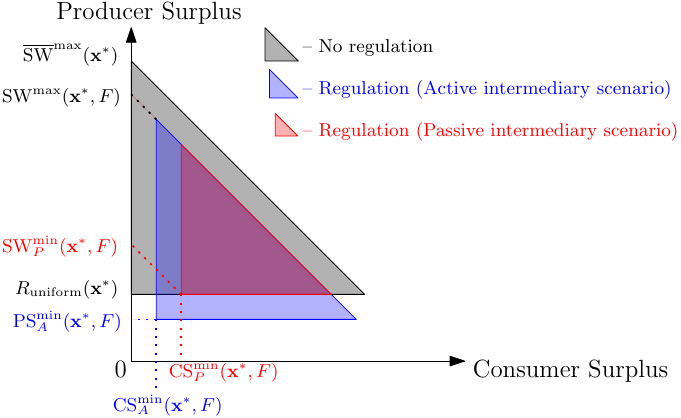}
    \caption{The illustration of the CS-PS feasible area in three different types of regulation. 
    }
    \label{fig:summary}
\end{figure}
So far, we have focused on transferring surplus to the buyers. A natural next step is to identify the seller-optimal market segmentation and characterize all possible ways to split the surplus between buyers and the seller.

We answer these questions in \cref{sec:passive-inter} by exactly characterizing the entire set of achievable producer surplus, consumer surplus pairs for any feasible $F$. We show that the region is a right triangle, as depicted in red in \cref{fig:summary}. The rightmost point is achieved by the buyer-optimal segmentation in \cref{thm:informal_main}. We also present non-trivial market segmentations for the other two extreme points of the triangle: the top-left point which achieves the maximum total surplus $\SW^{\max}(\bx^*, F)$ and the maximum producer surplus simultaneously, and the bottom-left point which achieves minima of both consumer and producer surplus simultaneously. Any point within the triangle is achieved by randomizing between these three schemes. Unlike \cref{thm:informal_main}, even if $F$ includes an optimal uniform price, \citet{bergemann2015limits}'s method does not help identifying the region, as shown in \cref{app:bbm}.

\paragraph{Insights.}  
Our characterization yields several key insights. First, consumer-optimality persists under regulation: for any feasible price set $F$, a segmentation exists that maintains the seller’s uniform-price revenue while allocating all remaining surplus to consumers. Second, the minimum consumer surplus is independent of seller strategy—it remains unchanged whether the seller maximizes or minimizes revenue. Third, the seller-optimal segmentation also maximizes total welfare.

From a regulatory standpoint, the triangular surplus region offers a clear view of the fairness-efficiency trade-off. Notably, the left edge of the region gives an exact lower bound on consumer surplus across all feasible segmentations, providing a robust benchmark for regulatory design. Further discussion is provided in \cref{app:final_discussion}.

\paragraph{Technical Contribution: Segmentation Algorithms.}
A central technical contribution of our work is the design of segmentation algorithms that attain the three extreme points of the regulated surplus region. While these constructions may appear conceptually similar to those in \citet{bergemann2015limits}, they differ fundamentally due to the presence of regulatory constraints, which render the original method inapplicable---even when the optimal uniform price lies within the regulated set $F$ (see \cref{app:bbm} for discussion).

Each extremal point requires a distinct segmentation strategy. The seller-optimal scheme, while still extracting equal-revenue segments from the residual market, carefully selects the support of each segment to maximize producer surplus. The consumer-optimal scheme strategically switches from a seller-favoring process to a buyer-favoring one, maximizing consumer surplus while preserving feasibility. The social-welfare-minimizing scheme further modifies the residual market to simultaneously reduce both consumer and producer surplus. These constructions maintain distinct invariants and cannot be unified into a single procedure. The design and analysis of these algorithms—ensuring feasibility under regulation while achieving each extremal surplus benchmark—are non-trivial and form the technical core of our paper.

\subsubsection{Active Intermediary Model}
Recall that in this model, the seller is required to set a price within the regulated set $F$ for each market segment—even if the seller's optimal price lies outside $F$. As a result, \emph{any} price set $F$ is feasible in the active intermediary model, since the seller is forced to comply with the regulation. This restriction simplifies the analysis compared to the passive intermediary model.

Note that the uniform pricing revenue under this constraint can be strictly lower than that obtained by selecting the globally optimal price from $V$. We define the constrained uniform pricing revenue as $
\ActivePSMin(\bx^*, F) = \max_{v \in F} \{ v \cdot \sum_{i: v_i \ge v} x^*_i \}$. As in the passive case, the maximum achievable total surplus remains $
\SW^{\max}(\bx^*, F) = \sum_{i: v_i \ge \min(F)} x^*_i \cdot v_i$, since all buyers with value at least $\min(F)$ will purchase the good under any $F$-instructed scheme.

Our main result for the active intermediary model parallels that of the passive case: the entire region of achievable consumer and producer surplus is characterized by three inequalities. The full model and proof are provided in \cref{sec:active-inter}.

\begin{theorem}[Informal, see \cref{thm:RSeller_base_bounds}]
In the active intermediary model, for any $F$, there is a market segmentation where the seller revenue equals that of uniform pricing within $F$ ($\ActivePSMin(F)$), while the total surplus is $\SW^{\max}(\bx^*, F)$, hence transferring all remaining surplus to the buyers.
\end{theorem}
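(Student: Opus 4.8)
The plan is to reduce the active, $F$-regulated problem to the unregulated problem solved by \citet{bergemann2015limits}, applied to a transformed market. First I would set aside every buyer with value $v_i < \min(F)$: since the seller must post a price in $F$, which is at least $\min(F) > v_i$, such a buyer never purchases and contributes $0$ to both consumer and producer surplus in \emph{every} segmentation, and these buyers do not appear in $\SW^{\max}(\bx^*,F)=\sum_{i:v_i\ge\min(F)}x_i^*v_i$. Hence they may be placed in an arbitrary segment and we restrict attention to the sub-market $\bx^{\dagger}$ of buyers with $v_i\ge\min(F)$.

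Next I would define a transformed market $\hat{\bx}$ on the value set $F=\{v_\ell,\dots,v_r\}$, obtained from $\bx^{\dagger}$ by leaving the mass at each $v_j$ with $\ell\le j\le r-1$ unchanged and moving all buyers with value $\ge v_r=\max(F)$ down to value $v_r$. Then $\mass_{\ge v_j}(\hat{\bx})=\mass_{\ge v_j}(\bx^*)$ for every $v_j\in F$, so each price $v_j\in F$ earns the same revenue in $\hat{\bx}$ as in $\bx^*$; in particular the best price of $F$ for $\hat{\bx}$ earns $\ActivePSMin(\bx^*,F)$. Moreover prices strictly below $\min(F)$ are dominated (the whole population of $\hat{\bx}$ already buys at $\min(F)$, for strictly more revenue) and prices strictly above $\max(F)$ earn $0$, so the \emph{unregulated} revenue-optimal uniform price of $\hat{\bx}$ already lies in $F$ and equals $\ActivePSMin(\bx^*,F)$.

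Then I would invoke the unregulated result of \citet{bergemann2015limits} on $\hat{\bx}$: its equal-revenue greedy yields a segmentation $\{\hat{\bx}_q\}_q$ of $\hat{\bx}$ in which (i) each $\hat{\bx}_q$ is equal-revenue on its support $S_q\subseteq\support(\hat{\bx})\subseteq F$, so its unregulated optimal price is attained on $S_q$ and hence lies in $F$; (ii) $\sum_q[\text{opt.\ revenue of }\hat{\bx}_q]=[\text{opt.\ revenue of }\hat{\bx}]=\ActivePSMin(\bx^*,F)$; and (iii) the seller is indifferent among all prices of $S_q$ in segment $q$. Since $\support(\hat{\bx}_q)\subseteq F$, the $F$-regulated optimal revenue of $\hat{\bx}_q$ coincides with its unregulated optimal revenue, so (ii) already gives the target producer surplus. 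Finally I would lift this segmentation back to $\bx^*$: give segment $q$ exactly $\hat x_{q,j}$ of the original buyers with value $v_j$ for $\ell\le j\le r-1$, and distribute the original buyers with value $\ge v_r$ so that segment $q$ receives total mass $\hat x_{q,r}$ of them (feasible since $\sum_q\hat x_{q,j}=x_j^*$ for $j<r$ and $\sum_q\hat x_{q,r}=\sum_{i:v_i\ge v_r}x_i^*$). The lifted segment $q$ has the same revenue curve on $F$ as $\hat{\bx}_q$, hence the same $F$-optimal revenue, and the seller is still indifferent among $S_q$; letting him post $\min(S_q)\le\min(\support(\text{lifted segment }q))$ makes every buyer in the segment buy. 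Summing, producer surplus is $\ActivePSMin(\bx^*,F)$ and total surplus is $\sum_{i:v_i\ge\min(F)}x_i^*v_i=\SW^{\max}(\bx^*,F)$, as claimed. (As is standard here, I assume the intermediary may break the seller's ties toward the lower price, or equivalently recommend a price among the seller's optima.)

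The one step needing genuine care — and the only departure from a literal citation of \citet{bergemann2015limits} — is the passage through $\hat{\bx}$: coalescing all high-value buyers at $\max(F)$ is what makes the $F$-constraint vanish, but one must check that this coalescing costs no welfare, which is exactly why the segmentation is lifted back to the true values, where those buyers, being above every support minimum, still always purchase. Everything else is bookkeeping layered on the unregulated construction, which is why this proof is simpler than the passive case.
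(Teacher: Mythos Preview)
Your proposal is correct and follows essentially the same approach as the paper: truncate below $\min(F)$, collapse all mass above $\max(F)$ onto $\max(F)$, run the \citet{bergemann2015limits} decomposition on the resulting market (whose support lies in $F$, so the unregulated and $F$-restricted problems coincide), and then lift each segment back by redistributing the high-value buyers. The paper's \cref{alg:Rseller_extend} does exactly this, with the only cosmetic differences that it (i) dumps all low-value buyers into the first segment rather than an arbitrary one, and (ii) redistributes the high-value mass proportionally rather than allowing any redistribution preserving the total; your more general redistribution works for the same reason you give, namely that the revenue curve on $F$ depends only on the cumulative mass at or above each $v_j\in F$.
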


Analogous to the passive intermediary model, we characterize the entire region of achievable producer and consumer surplus in the active intermediary setting and show that it again forms a right triangle—illustrated as the blue region in \cref{fig:summary}. To interpret this figure, note that any market scheme which passively induces the seller to post prices within $F$ will yield the same consumer and producer surplus under the active model. As a result, the feasible region for the passive intermediary is a subset of that for the active intermediary. Moreover, the hypotenuses of both right triangles lie on the same line, representing the same total social welfare: $\sum_{i: v_i \ge \min(F)} x^*_i v_i$. Interestingly, we observe that the consumer surplus achievable in the active intermediary model can exceed that in the unregulated case. This is because the seller can be \emph{forced} to post lower prices within $F$, even when his optimal price lies outside $F$, thereby transferring additional surplus to the buyers.

\subsection{Additional Related Work}

\paragraph{Information Design and Persuasion.} Our model of price discrimination falls in the general framework of information design and persuasion, first presented by \cite{kamenica2011bayesian}. In this framework, there is a sender (she) and a receiver (he). The receiver takes an action based on his perception of the state of nature that yields him optimal utility. However, the sender has a different utility function over actions of the receiver and the state of nature.  The sender knows more information about the state of nature than the receiver, and is able to signal her information to the receiver. The receiver updates his belief about the state of nature using this signal and subsequently acts optimally (under his own utility function). The sender's goal is to signal in a fashion that maximizes her own utility function.  Several problems in information design fall in this general framework, and hence it has been widely studied~\citep{rayo2010optimal,wang2013bayesian,kolotilin2017persuasion,kamenica2017competition,DBLP:conf/sigecom/DughmiKQ16,DBLP:journals/jet/ArieliB19,DBLP:journals/siamcomp/DughmiX21,DBLP:conf/innovations/BabichenkoTXZ22,DBLP:journals/geb/BabichenkoTXZ22,DBLP:journals/jet/MatyskovaM23,DBLP:conf/sigecom/KremerMP13,DBLP:journals/mktsci/ChakrabortyH14,bergemann2015limits,DBLP:conf/aaai/XuRDT15,DBLP:conf/atal/RabinovichJJX15,DBLP:journals/ior/MansourSS20,haghpanah2023pareto}. We refer readers to \cite{dughmi2017algorithmic,kamenica2019bayesian,bergemann2019information} for surveys on the literature.

\paragraph{Price Discrimination.} Third-degree price discrimination is a special case of persuasion where the state of nature is the buyer's valuation. The intermediary is the sender who knows the buyers' valuation and signals this to the seller. The seller is the receiver who sets a price (the action) to maximize his revenue given the signal. This model was first developed by~\citet{bergemann2015limits}. They showed optimal signaling schemes that preserve seller revenue while transferring the rest of the surplus to the buyers. \citet{CummingsD0W20} subsequently specified the set of all buyer-optimal signaling schemes by a linear program; this also follows from a more general result for persuasion in \citet{DBLP:journals/siamcomp/DughmiX21}. 

This general model has been extended to continuous distributions \citep{DBLP:conf/atal/ShenTZ18a}; auction design with multiple buyers \citep{DBLP:conf/sigecom/AlijaniBMW22}; a partially informed intermediary \citep{CummingsD0W20,arieli2024robust}; keyword search auctions \citet{DBLP:conf/www/BergemannDLZ22}; privacy constraints \citep{fallah2024limits}; a buyer with a public budget or a deadline \citep{DBLP:conf/sigecom/KoM22}; multiple rounds of interactive persuasion \citep{DBLP:conf/innovations/MaoLW22};  unifying with the second price discrimination \citep{bergemann2024unified}; and game-theoretic formulations of seller and buyer behavior \citep{AliLV20}. 
Our work on price regulation is a way to impose {\em fairness} on the discrimination process. The work of \cite{BanerjeeMSW24} also considers fairness and present signaling schemes that optimize concave functions of buyer surplus rather than only focusing on total surplus. In contrast, we achieve fairness via making the intermediary's signal restrict the seller's optimal actions (prices set for different signals), which leads to entirely different schemes and techniques. 

\paragraph{Price Regulation.}
Price discrimination has become a reality due to the rapidly increasing amount of data collected from consumers, which enables ML models to predict buyer valuations. Due to concerns about fairness and privacy, regulations against arbitrary price discrimination have become a heated topic in policy discussions \citep{BigDataWH,gee2018fair,gerlick2020ethical,gillis2020false}. Several recent works have proposed models and objectives for price regulation \citep{acquisti2015privacy,acquisti2016economics,cowan2018regulating,kallus2021fairness}, and studied the implications of price regulation on the social welfare and the surplus of producers (firms) and consumers (buyers) \citep{chen2023data,fallah2024limits,yang2024fairness, yang2023regulating,CohenEL22, XuZ00SX22}. 
These works focus on directly regulating the seller; however, our work {\em indirectly regulates} the seller via the intermediary's behavior.
To the best of our knowledge, we are the first to study price regulation in the price discrimination model of \citep{bergemann2015limits}, where the intermediary selectively reveals information to make an unconstrained seller's optimal behavior mimic regulation.

\subsection{Roadmap}
\cref{sec:setting} introduces the model and key definitions for the passive intermediary setting. In \cref{sec:passive-inter}, we present algorithms that construct the three market schemes achieving the extreme points of the consumer-producer surplus region. Full proofs are deferred to \cref{sect:proof-passive}. The active intermediary model is analyzed in \cref{sec:active-inter}, with additional discussion of the passive model provided in \cref{sect:discussion-passive-model}.

\section{Preliminaries for Passive Intermediary Model} \label{sec:setting}
\subsection{Information Intermediary Model}
We consider a setting where a seller offers a good to a finite population of buyers, each demanding at most one unit and having a private value drawn from a known finite set $V = \{v_1, v_2, \ldots, v_n\}$, with $v_1 < v_2 < \cdots < v_n$. A market is represented by a vector $\bx = (x_1, \ldots, x_n) \in \mathbb{R}_{\ge 0}^n$, where $x_i$ denotes the mass of buyers with value $v_i$. The total market mass is $\mass(\bx) = \sum_{i=1}^n x_i$. We denote the aggregate market by $\bx^*$ and, without loss of generality, normalize it so that $\mass(\bx^*) = 1$.

For any price $v \in V$, the cumulative demand is defined as $G_{\bx}(v) = \sum_{i: v_i \ge v} x_i$, representing the mass of buyers willing to purchase at price $v$ or higher. The support of market $\bx$ is $\support(\bx) = \{v_i \in V : x_i > 0\}$. For coordinate-wise comparisons, we write $\ba \le \bb$ if $a_i \le b_i$ for all $i \in [n]$.

If the seller posts a price $p \in V$, his revenue on market $\bx$ is $R_{\bx}(p) = p \cdot G_{\bx}(p)$. The set of revenue-maximizing prices is denoted by $\optPrice(\bx) = \arg\max_{v \in V} R_{\bx}(v)$. This set may contain multiple prices, and we assume the seller is indifferent among them.

The intermediary observes each buyer’s value and segments the market, recommending prices to the seller. The resulting structure is referred to as the \scheme.

\begin{definition}[\SCHEME]
    A \emph{\scheme} is a collection $\Z = \{(\bx_q, p_q)\}_{q \in [Q]}$, where each $\bx_q \in \mathbb{R}_{\ge 0}^n$ and $p_q \in V$. The collection $\{\bx_q\}_{q \in [Q]}$ forms a \emph{segmentation} of the aggregate market $\bx^*$, meaning $\sum_{q \in [Q]} \bx_q = \bx^*$. Each pair $(\bx_q, p_q)$ represents a market segment along with the price that the intermediary instructs the seller to post for that segment.
\end{definition}

We will later introduce two regulation models that differ in how strictly the seller adheres to these instructions.

Given a market segment and price pair $(\bx, p)$, the consumer surplus, producer surplus, and social welfare are defined as
\[
\cs(\bx, p) = \sum_{i: v_i \ge p} (v_i - p) \cdot x_i, \quad
\ps(\bx, p) = \sum_{i: v_i \ge p} p \cdot x_i, \quad
\sw(\bx, p) = \sum_{i: v_i \ge p} v_i \cdot x_i.
\]
These definitions extend naturally to a \scheme $\Z = \{(\bx_q, p_q)\}_{q \in [Q]}$, for which the aggregate consumer surplus, producer surplus, and social welfare are defined as:
\[
\CS(\Z) = \sum_{q \in [Q]} \cs(\bx_q, p_q), \quad
\PS(\Z) = \sum_{q \in [Q]} \ps(\bx_q, p_q), \quad
\SW(\Z) = \sum_{q \in [Q]} \sw(\bx_q, p_q).
\]
By construction, the identity $\CS(\Z) + \PS(\Z) = \SW(\Z)$ holds for any scheme $\Z$.

\subsection{Passive Intermediary Model} 
\label{sec:model_passive}

We now introduce regulation.\footnote{We defer definitions and results for the active intermediary model to \cref{sec:active-inter}. In the main body of the paper, we focus on the passive intermediary model.} Recall that the intermediary aims to ensure that the seller sets prices within a designated price range. A regulated price set $F$ is a subset of the values $V$. Throughout the paper, we assume that $F$ is a contiguous subset of $V$, i.e., $F = \{v_\ell, v_{\ell + 1}, \dots, v_r\}$ for some $\ell \le r$.

\begin{definition}[$F$-valid \SCHEME] \label{def:inter_IC}
    For a contiguous price set $F = \{v_\ell, v_{\ell+1}, \dots, v_r\} \subseteq V$, a \scheme $\Z = \{(\bx_q, p_q)\}_{q \in [Q]}$ is said to be \emph{$F$-valid} if for all $q \in [Q]$, the instructed price $p_q$ satisfies $p_q \in F \cap \optPrice(\bx_q)$.
\end{definition}

That is, a \scheme is $F$-valid if (i) every instructed price lies within the regulated set $F$, and (ii) the seller is willing to adopt the instructed price in each segment, since it is revenue-maximizing there. The seller thus has no incentive to deviate. As illustrated in \cref{ex:intro-four-F-outside}, the two market segments with the optimal prices together form an $F$-valid \scheme.

When restricting attention to $F$-valid schemes, such a scheme may not exist for every aggregate market and every price set $F$. This leads to the following feasibility condition.

\begin{definition}[$\bx^*$-Feasible Price Set]
    Given a market $\bx^*$, a contiguous price set $F \subseteq V$ is said to be \emph{$\bx^*$-feasible} if there exists an $F$-valid \scheme $\Z = \{(\bx_q, p_q)\}_{q \in [Q]}$ for $\bx^*$.
\end{definition}



This work seeks to answer the following fundamental question:

\begin{quote}
    \emph{Given a market $\bx^*$ and an $\bx^*$-feasible price set $F$, what is the achievable region of consumer surplus and producer surplus under $F$-valid \schemes?}
\end{quote}

\section{Market Schemes for the Passive Intermediary Model} \label{sec:passive-inter}
In this section, we address the question posed in \cref{sec:model_passive}. The formal results are stated in \cref{sect:result}, followed by a high-level outline of the proof in \cref{sect:outline-analysis}. Complete proofs are deferred to \cref{sect:proof-passive}.

\subsection{Main Theorem and Overview} \label{sect:result}

We begin by summarizing three key quantities that will anchor our characterization:
\begin{itemize}
    \item \textbf{Maximal social welfare}: The maximum social welfare achievable when instructed prices are restricted to the set $F = \{v_\ell, \ldots, v_r\}$ is upper bounded by the expected value of the buyer conditional on their value being at least $v_\ell$. Formally,
    \[
    \PassiveSWMax(\bx^*, F) = \sum_{i=\ell}^n x_i^* v_i.
    \]

    \item \textbf{Minimal producer surplus}: This is the seller’s optimal revenue in the absence of segmentation, i.e., the revenue from posting a single price:
    \[
    R_\uni(\bx^*) = \max_{p \in V} R_{\bx^*}(p).
    \]

    \item \textbf{Minimal consumer surplus}: Let $\A(\bx^*, F)$ denote the set of all $F$-valid \schemes{} for the market $\bx^*$. The minimal achievable consumer surplus over this set is defined as
    \[
    \PassiveCSMin(\bx^*, F) = \min_{\Z \in \A(\bx^*, F)} \CS(\Z).
    \]
    A detailed characterization of $\PassiveCSMin(\bx^*, F)$ is deferred to \cref{sect:proof-passive-SW}.
\end{itemize}

Our main result shows that these three quantities fully characterize the feasible region of achievable consumer and producer surplus under passive regulation.

\begin{theorem}[Main Theorem] \label{thrm:passive-overview}
    Let $\bx^*$ be a market and let $F = \{v_\ell, v_{\ell+1}, \dots, v_r\} \subseteq V$ be a contiguous, $\bx^*$-feasible price set. Then the set of achievable $(\text{consumer surplus}, \text{producer surplus})$ pairs is exactly the set of points $(x, y)$ satisfying: (1) $x + y \le \PassiveSWMax(\bx^*, F)$, (2) $x \ge \PassiveCSMin(\bx^*, F)$, and (3) $y \ge R_\uni(\bx^*)$.
\end{theorem}

The red triangle in \cref{fig:summary} illustrates the region described in \cref{thrm:passive-overview}. This region lies strictly within the full consumer-producer surplus frontier described by \citet{bergemann2015limits} (shown as the larger gray triangle), but the lower bound on producer surplus is the same in both.

The proof of \cref{thrm:passive-overview} follows a constructive approach. In \cref{sect:outline-analysis}, we present three explicit $F$-valid \schemes: one that maximizes producer surplus (\cref{sect:passive-ps-max}), one that maximizes consumer surplus (\cref{sect:passive-cs-max}), and one that minimizes social welfare (\cref{sect:passive-sw-min}). These correspond respectively to the top vertex (seller-optimal point), the bottom-right vertex (buyer-optimal point), and the bottom-left vertex (welfare-minimal point) of the red triangle in \cref{fig:summary}. The full proofs of the constructions are given in \cref{sect:proof-passive}. By taking convex combinations of these three extreme-point \schemes, we show that any point within the triangle can be implemented via some $F$-valid \scheme. This establishes the claimed region and completes the proof of \cref{thrm:passive-overview}.

\subsection{Outline of Analysis} \label{sect:outline-analysis}

Our constructions are based on the following definition of equal-revenue markets.

\begin{definition}[Equal-Revenue Market]
    A market $\bx \in \R_{\ge 0}^n$ is called an equal-revenue market if the seller obtains the same revenue at every price in its support, i.e., there exists a constant $C$ such that for all $p \in \support(\bx)$, $R_{\bx}(p) = C$.
\end{definition}

Given any price set $D \subseteq V$ (not necessarily contiguous), the unique equal-revenue market with unit total mass supported on $D$, denoted by $\bx^D = (x^D_1, x^D_2, \dots, x^D_n)$, is defined as follows:
\begin{equation} \label{eq:equal-revenue}
    x^D_i = 
    \begin{cases}
        0 & \text{if } v_i \not\in D, \\
        \min(D)/\max(D) & \text{if } v_i = \max(D), \\
        \min(D)\left(1 / v_i - 1 / \min\{v' \in D: v' > v_i\}\right) & \text{otherwise}.
    \end{cases}
\end{equation}
We further define $\ERM(D)$ as the set of all equal-revenue markets supported on $D$:
\[
    \ERM(D) = \{ \gamma \cdot \bx^D : \gamma \ge 0 \}.
\]

With this definition, \citet{bergemann2015limits} construct the consumer-surplus-maximizing \scheme in the absence of regulation. Specifically, for a market $\bx^*$, they iteratively identify the equal-revenue market $\bx^{\equal}$ supported on $\support(\bx^*)$ with the largest total mass that is dominated by $\bx^*$. The corresponding price for the market segment $\bx^{\equal}$ is set to the minimum value in its support:
\begin{equation} \label{eq:construction-bbm}
    \tag{No-Regulation}
    \left\{
    \begin{aligned}
        \bx^{\equal} &= \argmax \left\{ \mass(\bx) : \bx \in \ERM(\support(\bx^*)),\, \bx \le \bx^* \right\}, \\
        p &= \min(\support(\bx^{\equal})).
    \end{aligned}
    \right.
\end{equation}
This procedure is then applied recursively to the residual market $\bx^* - \bx^{\equal}$, continuing until the entire mass is exhausted.

\Cref{tab:construction-bbm} presents the result of applying the above method to the setting in \cref{ex:intro_four}. In the first round, the constructed equal-revenue market is supported on $\support(\bx^*) = \{1, 2, 3, 6\}$, as shown in the fourth column. The equal-revenue market with the maximum mass over this support is listed in the last column, along with the associated price determined by \cref{eq:construction-bbm}. This process continues for four rounds, ultimately reducing the market to zero. The four constructed market segments shown in the last column collectively constitute the final \scheme.

\begin{table}[htbp]
    \centering
    \caption{$V = \{1, 2, 3, 6\}$, $F = \{2, 3\}$, and $\bx^* = (0.36, 0.20, 0.18, 0.26)$. No regulation.}
    \vspace{-5px}
    \label{tab:construction-bbm}
    \scalebox{0.85}{
    \begin{tabular}{ccccc}
        \toprule
        Step & $\bx^*$ & $\optPrice(\bx^*)$ & $\support(\bx^{\equal}) = \support(\bx^*)$ & $(\bx^{\equal}, p)$ \\
        \midrule
        1 & $(0.36, 0.20, 0.18, 0.26)$ & \{6\} & \{1, 2, 3, 6\} & $((0.36, 0.12, 0.12, 0.12), 1)$ \\
        2 & $(0.00, 0.08, 0.06, 0.14)$ & \{6\} & \{2, 3, 6\} & $((0.00, 0.06, 0.06, 0.06), 2)$ \\
        3 & $(0.00, 0.02, 0.00, 0.08)$ & \{6\} & \{2, 6\} & $((0.00, 0.02, 0.00, 0.01), 2)$ \\
        4 & $(0.00, 0.00, 0.00, 0.07)$ & \{6\} & \{6\} & $((0.00, 0.00, 0.00, 0.07), 6)$ \\
        \midrule
        5 & $(0.00, 0.00, 0.00, 0.00)$ & \multicolumn{3}{c}{Done. Producer surplus: $1.56$. Consumer surplus: $1.30$.} \\
        \bottomrule
    \end{tabular}
    }
\end{table}

Note that although this method maximizes consumer surplus, the last segment in the example is supported solely on $\{6\}$, which is disjoint from $F$. As a result, the resulting \scheme is not \Fvalid, illustrating a limitation of the approach in \citet{bergemann2015limits} under regulatory constraints. A more detailed explanation of why the method of \citet{bergemann2015limits} fails under regulation is provided in \cref{app:bbm}.

Our constructions of different \schemes under regulation are inspired by the equal-revenue market framework. In each round, we iteratively extract a pair $(\bx^{\equal}, p)$ from the remaining market $\bx^*$, following the same high-level strategy. However, the specific design of $(\bx^{\equal}, p)$ differs across objectives, depending on which aspect—producer surplus, consumer surplus, or social welfare—is being optimized. A unified pseudocode summarizing these constructions is provided in \cref{alg:construction}.

\subsubsection{Producer-Surplus-Maximizing \SCHEME} \label{sect:passive-ps-max}
The high-level idea of constructing a producer-surplus-maximizing \scheme under regulation is similar to the no-regulation setting: segment consumers based on their values and charge them exactly their value. However, since prices must be constrained to lie in $F$, this idea can only be directly applied to consumers whose values fall within $F$. For consumers with values outside $F$, the segmentation rule under the \Fvalid constraint becomes non-trivial.

To address this, we construct the \scheme greedily by building market segments priced at values in $F$, enumerated from highest to lowest. At each step, the goal is to maximize the producer surplus extractable from the current market $\bx^*$ by including as many high-value consumers as possible subject to the \Fvalid constraint. To implement this, we modify the support of $\bx^{\equal}$ in \cref{eq:construction-bbm}, replacing $\support(\bx^*)$ with:
\begin{equation} \label{eq:S}
    B(\bx^*, F) = (\support(\bx^*) \backslash F) \cup \{\max(\support(\bx^*) \cap F)\}.
\end{equation}
That is, $B(\bx^*, F)$ includes all values outside $F$ and retains only the highest value within $F$. Based on this, $\bx^{\equal}$ and $p$ are constructed as follows:
\begin{equation} \label{eq:construction-PS-max}
    \tag{Regulation-PS-Max}
    \left\{
    \begin{aligned}
        \bx^{\equal} & = \argmax\{\mass(\bx): \bx \in \ERM(B(\bx^*, F)),\, \bx \le \bx^* \}, \\
        p & = \min(\support(\bx^{\equal}) \cap F).
    \end{aligned}
    \right.
\end{equation}

A detailed example based on \cref{ex:intro_four} is given in \cref{tab:construction-ps-max}. The key difference from \cref{tab:construction-bbm} lies in the fourth column $\support(\bx^{\equal})$, set to $B(\bx^*, F)$ rather than $\support(\bx^*)$. The resulting producer and consumer surpluses are $1.64$ and $0.86$, respectively, summing to $\PassiveSWMax(\bx^*, F) = 2.50$.

\begin{table}[htbp]
    \centering
    \caption{$V = \{1, 2, 3, 6\}$, $F = \{2, 3\}$, and $\bx^* = (0.36, 0.20, 0.18, 0.26)$. PS-maximizing under regulation.}
    \vspace{-5px}
    \label{tab:construction-ps-max}
    \scalebox{0.85}{
    \begin{tabular}{ccccc}
        \toprule
        Step & $\bx^*$ & $\optPrice(\bx^*)$ & $\support(\bx^{\equal}) = B(\bx^*, F)$ & $(\bx^{\equal}, p)$ \\
        \midrule
        1 & $(0.36, 0.20, 0.18, 0.26)$ & \{6\} & \{1, 3, 6\} & $((0.36, 0.00, 0.09, 0.09), 3)$ \\
        2 & $(0.00, 0.20, 0.09, 0.17)$ & \{6\} & \{3, 6\} & $((0.00, 0.00, 0.09, 0.09), 3)$ \\
        3 & $(0.00, 0.20, 0.00, 0.08)$ & \{2\} & \{2, 6\} & $((0.00, 0.16, 0.00, 0.08), 2)$ \\
        4 & $(0.00, 0.04, 0.00, 0.00)$ & \{2\} & \{2\} & $((0.00, 0.04, 0.00, 0.00), 2)$ \\
        \midrule
        5 & $(0.00, 0.00, 0.00, 0.00)$ & \multicolumn{3}{c}{Done. Producer surplus: $1.64$. Consumer surplus: $0.86$.} \\
        \bottomrule
    \end{tabular}}
\end{table}

We prove in \cref{sec:proof-passive-PS-PS} that this construction achieves our greedy objective: at each step, it extracts the maximum producer surplus from the residual market, using a carefully structured inductive argument. We further show in \cref{sec:proof-passive-PS-feasibility,sec:proof-passive-PS-CS} that this greedy method ensures feasibility and minimizes consumer surplus. In \cref{sect:justification}, we demonstrate that the top-down enumeration of $F$ is, in fact, the \emph{only} feasible strategy.

\subsubsection{Consumer-Surplus-Maximizing \SCHEME} \label{sect:passive-cs-max}
To maximize consumer surplus, the producer surplus must remain fixed at the uniform pricing revenue. A key observation from \citet{bergemann2015limits} is that their method always constructs equal-revenue markets whose support includes the optimal price of the original market $\bx^*$. This ensures that the producer surplus does not increase, allowing the residual value to be fully passed to consumers. In contrast, the PS-maximizing scheme in \cref{sect:passive-ps-max} may generate equal-revenue markets (e.g., with support $\{2\}$ in the final step of \cref{tab:construction-ps-max}) that exclude the original optimal price (e.g., $6$), thereby reducing consumer surplus.

To resolve this, our construction maintains the original optimal price as optimal in all residual markets. Specifically, $\bx^{\equal}$ is constructed based on two cases: (1) If $\optPrice(\bx^*) \cap F = \emptyset$, we use $B(\bx^*, F)$ from \cref{eq:S}, with the additional constraint that the optimal price of the residual market $\bx^* - \bx$ must still include that of $\bx^*$. (2) Otherwise, we apply \citet{bergemann2015limits}’s method directly. Formally, the construction is:
\begin{equation} \label{eq:construction-CS-max}
    \tag{Regulation-CS-Max}
    \left\{
    \begin{aligned}
        \bx^{\equal} & = \left\{\begin{aligned}
            & \argmax\{\mass(\bx): \bx \in \ERM(B(\bx^*, F)),\, \bx \le \bx^*,\, \optPrice(\bx^* - \bx) \supseteq \optPrice(\bx^*) \} \\
            & \qquad\qquad\qquad\qquad\qquad\qquad\qquad\qquad\qquad\qquad \text{if } \optPrice(\bx^*) \cap F = \emptyset, \\
            & \argmax\{\mass(\bx): \bx \in \ERM(\support(\bx^*)),\, \bx \le \bx^* \} \qquad\quad\text{otherwise.}
        \end{aligned}\right. \\
        p & = \min(\support(\bx^{\equal}) \cap F).
    \end{aligned}
    \right.
\end{equation}

\cref{tab:construction-cs-max} presents the result of applying our method to \cref{ex:intro_four}. The construction begins to diverge from \cref{tab:construction-ps-max} in the second row, where a smaller $\bx^{\equal}$ is selected to preserve $6$ as the optimal price of $\bx^* - \bx^{\equal}$. From that point onward, $6$ remains optimal in all residual markets and thus appears in the support of every constructed equal-revenue market. The resulting consumer surplus is $0.94$, larger than that in \cref{tab:construction-ps-max}, while the total surplus remains $\PassiveSWMax(\bx^*, F) = 2.50$.

\begin{table}[htbp]
    \centering
    \caption{$V = \{1, 2, 3, 6\}$, $F = \{2, 3\}$, and $\bx^* = (0.36, 0.20, 0.18, 0.26)$. CS-maximizing under regulation.}
    \vspace{-5px}
    \label{tab:construction-cs-max}
    \scalebox{0.85}{
    \begin{tabular}{ccccc}
        \toprule
        Step & $\bx^*$ & $\optPrice(\bx^*)$ & $\support(\bx^{\equal}) = B(\bx^*, F) \text{ or } \support(\bx^*)$ & $(\bx^{\equal}, p)$ \\
        \midrule
        1 & $(0.36, 0.20, 0.18, 0.26)$ & \{6\} & \{1, 3, 6\} ($B(\bx^*, F)$) & $((0.36, 0.00, 0.09, 0.09), 3)$ \\
        2 & $(0.00, 0.20, 0.09, 0.17)$ & \{6\} & \{3, 6\} ($B(\bx^*, F)$) & $((0.00, 0.00, 0.05, 0.05), 3)$ \\
        3 & $(0.00, 0.20, 0.04, 0.12)$ & \{2, 6\} & \{2, 3, 6\} ($\support(\bx^*)$) & $((0.00, 0.04, 0.04, 0.04), 2)$ \\
        4 & $(0.00, 0.16, 0.00, 0.08)$ & \{2, 6\} & \{2, 6\} ($\support(\bx^*)$) & $((0.00, 0.16, 0.00, 0.08), 2)$ \\
        \midrule
        5 & $(0.00, 0.00, 0.00, 0.00)$ & \multicolumn{3}{c}{Done. Producer surplus: $1.56$. Consumer surplus: $0.94$.} \\
        \bottomrule
    \end{tabular}}
\end{table}

From \cref{tab:construction-ps-max}, we observe an important switching point in the choice of $\support(\bx^{\equal})$: it changes from $B(\bx^*, F)$ to $\support(\bx^*)$ between Step~2 and Step~3, marking the transition from a seller-optimal to a buyer-optimal strategy. This switch is a key component of our construction. We prove in \cref{sect:proof-passive-CS} that this procedure ensures feasibility while simultaneously maximizing consumer surplus and minimizing producer surplus.

\subsubsection{Social-Welfare-Minimizing \SCHEME} \label{sect:passive-sw-min}
To minimize social welfare while preserving feasibility, we refine the regulated price set $F$ by raising its lower bound and limiting the use of certain values. This yields a minimal effective set $\tF$, interpreted as the smallest price set sufficient to maintain feasibility, as formalized in \cref{defn:sub-feasible} and illustrated by the shaded region on the left side of \cref{fig:proof-SWMin}. For instance, in \cref{fig:proof-SWMin}, we tighten the lower bound of $F$ from $v_2$ to $v_3$ and further restrict the total mass that can be priced at $v_3$.

\begin{figure}[htbp]
    \centering
        \includegraphics[width=0.7\linewidth]{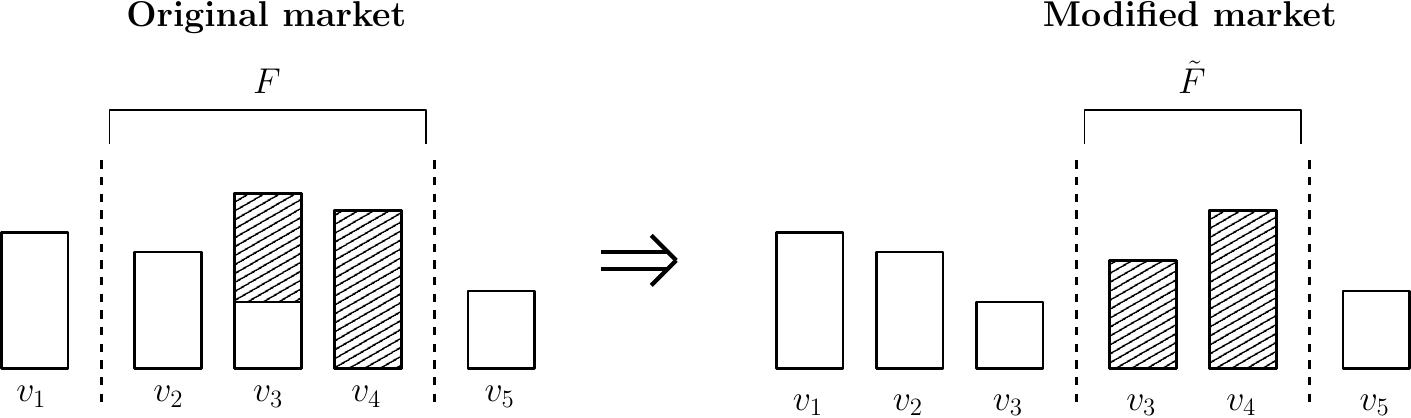}
    \vspace{-5px}
    \caption{Illustration of the idea behind constructing SW-minimizing \scheme.}
    \label{fig:proof-SWMin}
\end{figure}

Once $\tF$ is identified, we give an intuitive construction to illustrate the core idea: we remove $v_2$ from $F$ and split $x^*_3$ into two parts. The segmentation then proceeds using the greedy method from \cref{sect:passive-ps-max}, subject to the updated constraints imposed by $\tF$. This illustrative process is shown in the right panel of \cref{fig:proof-SWMin}. The formal construction and full analysis are provided in \cref{sect:proof-passive-SW}, where we prove that the resulting scheme minimizes both consumer and producer surplus.

\Cref{tab:construction-SW-min} shows the result of applying our method to \cref{ex:intro_four}. Only $0.16$ mass at $v_2$ is needed to satisfy feasibility, allowing $0.04$ to be extracted in the first row. The remaining rows follow the standard construction in \cref{eq:construction-PS-max}. The resulting producer surplus is $1.56$, matching that of the CS-maximizing \scheme in \cref{tab:construction-cs-max}, while the consumer surplus is $0.86$, matching that of the PS-maximizing \scheme in \cref{tab:construction-ps-max}.

\begin{table}[htbp]
    \centering
    \caption{$V = \{1, 2, 3, 6\}$, $F = \{2, 3\}$, and $\bx^* = (0.36, 0.20, 0.18, 0.26)$. SW-minimizing under regulation.}
    \vspace{-5px}
    \label{tab:construction-SW-min}
    \scalebox{0.85}{
    \begin{tabular}{ccccc}
        \toprule
        Step & $\bx^*$ & $\optPrice(\bx^*)$ & $\support(\bx^{\equal})$ & $(\bx^{\equal}, p)$ \\
        \midrule
        1 & $(0.36, 0.20, 0.18, 0.26)$ & \{6\} & \{1, 2, 3, 6\} & $((0.12, 0.04, 0.04, 0.04), 3)$ \\
        2 & $(0.24, 0.16, 0.14, 0.22)$ & \{6\} & \{1, 3, 6\} & $((0.24, 0.00, 0.06, 0.06), 3)$ \\
        3 & $(0.00, 0.16, 0.08, 0.16)$ & \{6\} & \{3, 6\} & $((0.00, 0.00, 0.08, 0.08), 3)$ \\
        4 & $(0.00, 0.16, 0.00, 0.08)$ & \{2, 6\} & \{2, 6\} & $((0.00, 0.16, 0.00, 0.08), 2)$ \\
        \midrule
        5 & $(0.00, 0.00, 0.00, 0.00)$ & \multicolumn{3}{c}{Done. Producer surplus: $1.56$. Consumer surplus: $0.86$.} \\
        \bottomrule
    \end{tabular}}
\end{table}



\section{Conclusions}
In this paper, we consider third-degree price discrimination with regulation by posting interval restriction on the seller's prices. For both passive and active intermediary scenarios, we characterize the feasible space of buyer and seller surplus by algorithmically constructing market segmentations. More importantly, we show that in both scenarios, we can achieve social welfare maximization while minimizing the seller's benefit, hence giving all the additional surplus to the buyers. 


\bibliographystyle{plainnat}
\bibliography{reference}

\clearpage

\appendix

\section{Proofs of the Passive Intermediary Model} \label{sect:proof-passive}
\paragraph{Additional Preliminaries.} \citet{CummingsD0W20} show that any market scheme can be converted to one where no two markets share the same instructed price, while preserving producer and consumer surplus. We capture this in the following definition.

\begin{definition} [Standard-form \SCHEME, \cite{CummingsD0W20}] \label{defn:standard-form}
    An \Fvalid \scheme $\Z$ is \emph{standard-form} if and only if $|Q| = |F|$, and $p_q = v_{\ell + q - 1}$ for any $q \in [|F|]$. Given an \Fvalid \scheme $\Z$, define the corresponding standard-form \scheme of $\Z = \{(\bx_q, p_q)\}_{q\in [|Q|]}$ as
    $$
        \standardization(\Z) = \{(\bx'_{q'}, p'_{q'})\}_{q' \in [|F|]} \text{ where } p'_{q'}  = v_{\ell + q' - 1}, \bx'_{q'} = \sum_{(\bx_q, p_q) \in \Z} \bbI[p_q = p'_{q'}]\cdot \bx_q.
    $$
\end{definition}

It therefore suffices to focus on \standard \schemes. 

\paragraph{A Unified Pseudo-code}
To support rigorous analysis, we present a unified pseudo-code in \cref{alg:construction} that captures the core ideas outlined in \cref{sect:outline-analysis} for constructing various \schemes. The only difference across these \schemes lies in \cref{line:construction-equation}, which invokes \cref{eq:construction-PS-max} for the producer-surplus-maximizing \scheme, \cref{eq:construction-CS-max} for the consumer-surplus-minimizing \scheme, and \cref{eq:construction-PS-max-modified-1,eq:construction-PS-max-modified-2} for the social-welfare-minimizing \scheme.

\begin{algorithm}[htbp]
    \caption{Constructive Method for Achieving \SCHEMES}
    \label{alg:construction}
    \begin{algorithmic}[1]
        \State \textbf{Input:} $\bx^*$, $F$, $V$
        \State $\Z \gets \{\}$;
        \While{$\support(\bx^*) \cap F \ne \emptyset$} \label{line:while}
            \State $(\bx^{\equal}, p)$ is determined by one of the specified construction rules \label{line:construction-equation}
            \Statex \Comment{\cref{eq:construction-PS-max,eq:construction-CS-max,eq:construction-PS-max-modified-1,eq:construction-PS-max-modified-2}}
            \State $\bx^* \leftarrow \bx^* - \bx^{\equal}$; \label{line:construction-xstar-update}
            \State $\Z \leftarrow \Z \cup \{(\bx^{\equal}, p)\}$
        \EndWhile
        \State $\bx^{\remain} \gets \bx^*$; \Comment{We will later show that $\bx^{\remain} = \boldsymbol{0}$} \label{line:bx-remain}
        \State $\Z \gets \standardization(\Z)$;
        \State \textbf{Output:} \scheme $\Z$
    \end{algorithmic}
\end{algorithm}

We now present rigorous proofs for the \schemes maximizing producer surplus, maximizing consumer surplus, and minimizing social welfare, in \cref{sect:proof-passive-PS,sect:proof-passive-CS,sect:proof-passive-SW}, respectively.

\subsection{Producer-Surplus-Maximizing \SCHEME} \label{sect:proof-passive-PS}
Let $\ZPP$ be the output of \cref{alg:construction} to achieve the PS-maximizing \scheme. The following theorem holds.
\begin{theorem} \label{thrm:passive-PS}
    Given a market $\bx^*$ and an $\bx^*$-feasible contiguous price set $F = \{v_\ell, \ldots, v_r\}$, let $\ZPP$ denote the output and let $\bx^{\remain}$ be defined as in \cref{line:bx-remain} of \cref{alg:construction}, when \cref{eq:construction-PS-max} is used in \cref{line:construction-equation}. Then $\bx^{\remain} = 0$ so that $\ZPP$ is an \Fvalid \scheme on $\bx^*$. Further,
    \begin{equation*} \label{eq:passive-PS}
        \CS(\ZPP) = \PassiveCSMin(\bx^*, F), \quad \PS(\ZPP) = \sum_{i=\ell}^n v_ix^*_i - \PassiveCSMin(\bx^*, F).
    \end{equation*}
\end{theorem}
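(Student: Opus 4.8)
The statement bundles three assertions: (i) the while loop terminates with $\bx^{\remain} = \boldsymbol{0}$; (ii) $\ZPP$ is a standard-form $\Fvalid$ \scheme on $\bx^*$; and (iii) the two surplus identities. Assertion (ii) is the easy part and is essentially independent of (i). In each iteration $S = (\support(\bx)\setminus F)\cup\{u\} \subseteq \support(\bx)$, so $\bx - \gamma' \bx^{\equal} \ge 0$ for all small $\gamma' > 0$ and hence $\gamma > 0$; moreover a direct telescoping computation with \cref{eq:equal-revenue} shows $R_{\bx^{\equal}}(p) = \min(S)$ for every $p \in S$ while $R_{\bx^{\equal}}(p) < \min(S)$ for every $p \notin S$, so $\optPrice(\bx^{\equal}) = S$. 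Since $u \in S \cap F$, each pair $(\gamma\bx^{\equal}, u)$ added to $\ZPP'$ satisfies $u \in F \cap \optPrice(\gamma\bx^{\equal})$. Passing to $\standardization(\cdot)$ merges segments sharing a price; as the revenue function of a sum of markets is the sum of the individual revenue functions, a common optimal price of the merged markets remains optimal, so $F$-validity survives, and \cite{CummingsD0W20} guarantees that consumer and producer surplus are preserved (a merged segment with no mass carries its price in $F$ vacuously). Thus, once (i) holds, $\ZPP \in \A(\bx^*,F)$.

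Granting (i), the welfare identity follows by tracking which buyers transact. A buyer with value $v_i \in F$ has her mass at $v_i$ removed only in iterations with $u = v_i$, because when $u \ne v_i$ we have $v_i \notin S$; in those iterations the instructed price is exactly $v_i$, so she buys. A buyer with value $v_i > v_r$ is always charged $u \le v_r < v_i$, so she buys; a buyer with value $v_i < v_\ell$ is always charged $u \ge v_\ell > v_i$, so she never buys. Hence every buyer with value at least $v_\ell$ transacts, so $\SW(\ZPP) = \sum_{i \ge \ell} v_i x^*_i = \PassiveSWMax(\bx^*,F)$, which together with $\CS + \PS = \SW$ gives $\PS(\ZPP) = \PassiveSWMax(\bx^*,F) - \CS(\ZPP)$. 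It then remains to identify $\CS(\ZPP)$ with $\PassiveCSMin(\bx^*,F)$.

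The heart of the proof --- and the step I expect to be the main obstacle --- is to establish (i) and the identity $\CS(\ZPP) = \PassiveCSMin(\bx^*,F)$ together. The driving intuition is that peeling off an equal-revenue market at the \emph{largest} currently available feasible price $u$ is optimal on two counts: among buyers in that segment only those with value $v_i > v_r$ earn positive surplus $v_i - u$, which a larger $u$ makes smaller; and, by the equal-revenue property, the ratio of above-$F$ mass to $u$-mass inside $\bx^{\equal}$ equals $u/\max(S)$, which is \emph{increasing} in $u$, so a larger $u$ also absorbs more above-$F$ mass per unit of scarce $F$-mass. I would formalize both points by an exchange argument: starting from an arbitrary $\Fvalid$ \scheme $\Z$ (which exists because $F$ is $\bx^*$-feasible), transform it step by step so that its successive removals coincide with those of \cref{alg:passive-PS}, all the while keeping a valid market segmentation of $\bx^*$ and never increasing consumer surplus. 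This reduces to an induction --- most naturally on the number of distinct support values outside $F$, or on the iteration count --- whose invariant is that the algorithm's residual market is still coverable by some $\Fvalid$ \scheme; since a nonzero market whose support is disjoint from $F$ admits no $\Fvalid$ \scheme, feasibility forces $\bx^{\remain} = \boldsymbol{0}$, and the same exchange argument certifies $\CS(\ZPP) \le \CS(\Z)$ for every $\Fvalid$ \scheme $\Z$, that is, $\CS(\ZPP) = \PassiveCSMin(\bx^*,F)$. The delicate bookkeeping is the case analysis of whether the algorithm's $\gamma$ is capped by the mass at $u$ or by the mass at a value outside $F$, and re-routing the mass of $\Z$ accordingly; once this is in place, the explicit value of $\PassiveCSMin(\bx^*,F)$ promised in \cref{sec:passive-sw-min} can be read off the algorithm's trace as the sum over iterations of $\gamma\sum_{v_i > v_r} x^{\equal}_i (v_i - u)$. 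Finally, the polynomial running time follows from the standard-form LP characterization of $\A(\bx^*,F)$ in \cite{CummingsD0W20}.
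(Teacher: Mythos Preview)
Your treatment of the easy pieces---$F$-validity of each $(\gamma\bx^{\equal},u)$, preservation under $\standardization$, and the welfare identity $\SW(\ZPP)=\PassiveSWMax(\bx^*,F)$---is correct and matches the paper. The gap is in the hard part. The exchange argument you sketch for the invariant ``the residual market is still coverable by some $\Fvalid$ scheme'' is not substantiated, and I do not see a direct way to make it work. Concretely: when the algorithm peels off $\gamma\bx^{\equal}$ supported on $(\support(\bx)\setminus F)\cup\{u\}$, you must subtract that same vector from \emph{some} $\Fvalid$ scheme $\Z$ on $\bx$ and leave an $\Fvalid$ scheme on $\bx-\gamma\bx^{\equal}$. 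But the segments of $\Z$ may carry out-of-$F$ mass in proportions very different from the equal-revenue proportions of $\bx^{\equal}$, and redistributing mass between segments while keeping every instructed price optimal is exactly the obstruction; there is no local re-routing rule that obviously preserves $F$-validity. Your ``absorbs more above-$F$ mass per unit of $F$-mass'' intuition is right but does not give the invariant on its own, since in some iterations $\gamma$ is capped by a value \emph{below} $F$, so the algorithm can spend $u$-mass without consuming any above-$F$ mass at all.

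The paper circumvents the step-by-step exchange entirely and reverses your logical order. It first proves a global dominance statement: defining $\PSMS_k(F,\Z)=\sum_{i=r-k+1}^{r} G_{\bx_{i+1-\ell}}(v_i)$, it shows by induction on $k$ (after first transforming any competitor so that each segment contains a single value from $F$) that $\PSMS_k(F,\ZPP)\ge\PSMS_k(F,\Z)$ for every $k$ and every $\Fvalid$ $\Z$. This induction is the technical heart and uses the equal-revenue ratios together with an Abel-type rearrangement; it is carried out for $\ZPP$ viewed as a scheme on $\bx^*-\bx^{\remain}$, \emph{without} assuming $\bx^{\remain}=\boldsymbol{0}$. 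From PSMS dominance one gets $\PS(\ZPP)\ge\PS(\Z)$ for every $\Fvalid$ $\Z$. Only then is $\bx^{\remain}=\boldsymbol{0}$ proved, by contradiction: if $v'\in\support(\bx^{\remain})$ then $v'$ is optimal in every segment of $\ZPP$, whence $R_{\bx^*}(v')=R_{\bx^{\remain}}(v')+\PS(\ZPP)>\PS(\Z_1)\ge R_{\bx^*}(v')$ for any $\Fvalid$ $\Z_1$. Finally $\CS(\ZPP)=\PassiveCSMin$ drops out of $\PS$-maximality plus $\SW$-maximality via the same transform. So the missing idea in your plan is a substitute for the PSMS induction; the feasibility and CS-minimality you aim for directly are, in the paper, corollaries of producer-surplus maximality.
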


\paragraph{Proof Sketch of \cref{thrm:passive-PS}.}
At a high level, we first show in \cref{sec:proof-passive-PS-PS} that even without requiring $\bx^{\remain} = \boldsymbol{0}$, the producer surplus of $\ZPP$ can exceed that of any \Fvalid \scheme of $\bx^*$. Building on this, \cref{sec:proof-passive-PS-feasibility} establishes that $\bx^{\remain}$ is indeed zero, and thus $\ZPP$ constitutes an \Fvalid \scheme of $\bx^*$. Finally, in \cref{sec:proof-passive-PS-CS}, we show that $\ZPP$ achieves the minimal consumer surplus among all \Fvalid \schemes—i.e., $\PassiveCSMin(\bx^*, F)$—and that its total welfare is exactly $\sum_{i = \ell}^n v_i x_i^*$.
These analyses together complete the proof of \cref{thrm:passive-PS}.

\subsubsection{Proof of \cref{thrm:passive-PS}: Producer Surplus Maximization} \label{sec:proof-passive-PS-PS}
In this part, we allow $\bx^{\remain}$ to be nonzero, and we slightly overload notation by letting $\ZPP = \{(\bx_q, p_q)\}_{q \in [|F|]}$ denote the \Fvalid \scheme on $\bx^* - \bx^{\remain}$, expressed in standard form as defined in \cref{defn:standard-form}. Note that the total mass of $\ZPP$ may be strictly less than that of $\bx^*$, and recall that each market $\bx_q$ has instructed price $p_q = v_{q + \ell - 1}$.

\begin{proposition} \label{thrm:passive-ps-ps}
For any \Fvalid \scheme $\Z_0$ of $\bx^*$, we have
\[
\PS(\ZPP) \ge \PS(\Z_0).
\]
\end{proposition}

\begin{proof}
Without loss of generality, we may assume that $\Z_0$ is in standard form, according to \cref{defn:standard-form}. Let $\Z_0 = \{(\bx'_q, p_q)\}_{q \in [|F|]}$ with $p_q = v_{\ell + q - 1}$. For any \standard \scheme $\Z$ and any $k \in [|F|]$, we introduce the notation $\PS_k(\Z)$ to denote the total producer surplus obtained from market segments in $\Z$ with instructed prices no less than $v_{r - k + 1}$, i.e.,
\[
\PS_k(\Z) = \sum_{(\bx, p) \in \Z} \ps(\bx, p) \cdot \bbI[p \ge v_{r - k + 1}].
\]
By definition, we have $\PS_{|F|}(\Z) = \PS(\Z)$. Therefore, to prove \cref{thrm:passive-ps-ps}, it suffices to show that $\PS_k(\ZPP) \ge \PS_k(\Z)$ for all $k \in [|F|]$. We establish this via induction on $k$.

The statement is always true when $k = 0$. By inductive hypothesis, assume that the statement is true when $k \le s$ ($s \ge 0$). Consider the case when $k = s + 1$.

Recall that $\ZPP = \{(\bx_q, p_q)\}_{q \in [|F|]}$ and $\Z_0 = \{(\bx'_q, p_q)\}_{q \in [|F|]}$, where $p_q = v_{\ell + q - 1}$. Define
\[
\hat{\bx} = \sum_{q = |F| - k + 1}^{|F|} \bx_q \quad \text{and} \quad \ui = \min\{i > r : \hat{x}_{i} < x_i^*\}.
\]
If there is no such index $i$ satisfying $\hat{x}_{i} < x_i^*$, we define $\ui = n + 1$. In addition, for each $q \in \{|F| - k + 1, \ldots, |F|\}$, define
\[
L_q = \sum_{i = \ell + q - 1}^{\ui - 1} x_{q,i}, \quad S_q = G_{\bx_q}(v_{\ui}), \quad L_q' = \sum_{i = \ell + q - 1}^{\ui - 1} x_{q,i}', \quad S_q' = G_{\bx_q'}(v_{\ui}), \quad D_q = L_q - L_q'.
\]
We set $S_q = 0$ and $S_q' = 0$ if $\ui = n + 1$. $v_{n+1}$ is set as $+\infty$. The intuition behind this construction is as follows. For each $\bx_q$, we divide its mass over $\{v_{\ell + q - 1}, \ldots, v_n\}$ into two parts: $L_q$ captures the mass below $v_{\ui}$ (i.e., over $\{v_{\ell + q - 1}, \ldots, v_{\ui - 1}\}$), while $S_q$ captures the remaining mass at or above $v_{\ui}$. The same decomposition is applied to $\bx_q'$, yielding $L_q'$ and $S_q'$. The definitions of $L_q$, $S_q$, $L_q'$, $S_q'$, and $D_q$ form the basis of the argument. We begin by stating several basic observations, whose proofs are provided later.
\begin{lemma} \label{lemma:c1}
The following three statements hold:
\begin{enumerate}
    \item For all $q \in \{|F|-k+1,\ldots,|F|\}$, it holds that $S_q = L_q  \cdot \frac{p_{q}}{v_{\ui} - p_q}$.
    \item For all $q \in \{|F|-k+1,\ldots,|F|\}$, it holds that $S_q' \le L_q' \cdot \frac{p_{q}}{v_{\ui} - p_q}$.
    \item $\sum_{q = |F|-k+1}^{|F|} D_q \ge 0$.
\end{enumerate}
\end{lemma}

Define
\[
t^* = \max\left\{0 < t \le k : \sum_{q = |F| - t + 1}^{|F|} D_q < 0\right\}.
\]
If no such $t$ exists, set $t^* = 0$. By the third part of \cref{lemma:c1}, we must have $t^* < k$. It follows that for any $t' \in [t^* + 1, k]$, the partial sum satisfies
\[
\sum_{q = |F| - t' + 1}^{|F|} D_q \ge 0.
\]
Therefore, we obtain:
\begin{equation} \label{eq:D}
    \forall t' \in [t^* + 1, k], \qquad 
    \sum_{q = |F| - t' + 1}^{|F| - t^*} D_q 
    = \underbrace{\left(\sum_{q = |F| - t' + 1}^{|F|} D_q\right)}_{\ge 0}
    - \underbrace{\left(\sum_{q = |F| - t^* + 1}^{|F|} D_q\right)}_{\le 0} 
    \ge 0.
\end{equation}
In addition, since $L_q + S_q = G_{\bx_q}(p_q)$  and $L'_q + S'_q = G_{\bx'_q}(p_q)$, we have for any $t \le k$: 
$$
\PS_t(\ZPP) = \sum_{q=|F| - t + 1}^{|F|} p_q\left(L_q  + S_q \right) \quad \mbox{and} \quad \PS_t(\Z_0) = \sum_{q=|F| - t + 1}^{|F|} p_q\left(L'_q  + S'_q \right).
$$
By the induction hypothesis, we have $\PS_{t^*}(\ZPP) \ge \PS_{t^*}(\Z_0)$ since $t^* < k$. We have the following chain of inequalities: 
\begin{align*}
    & \, \PS_{k}(\ZPP) -  \PS_{k}(\Z_0) \\
    = & \, \PS_{t^*}(\ZPP) - \PS_{t^*}(\Z_0) + \sum_{q = |F| - k + 1} ^ {|F| - t^*} p_q(L_{q}  + S_{q}  - L_{q}' - S_{q}' ) \\
    \ge & \, 0 + \sum_{q = |F|-k + 1}^{|F|-t^*} p_q\left(L_q \cdot \frac{v_{\ui}}{v_{\ui} - p_q} - L_q' \cdot \frac{v_{\ui}}{v_{\ui} - p_q} \right) \\
    = & \, \sum_{q = |F|-k+ 1} ^ {|F|-t^*} (L_q  - L_q')\cdot \frac{p_qv_{\ui}}{v_{\ui} - p_q}  = \sum_{q = |F|-k+ 1} ^ {|F|-t^*} D_q\cdot \frac{p_qv_{\ui}}{v_{\ui} - p_q} \\
    = & \, \frac{p_{|F|-k+1}v_{\ui}}{v_{\ui} - p_{|F|-k+1}} \cdot \sum_{q = |F|-k+ 1} ^ {|F|-t^*} D_q + \sum_{q = |F| - k + 2}^{|F|-t^*} \left(\sum_{u = q}^{|F|-t^*} D_u \right) \cdot \left( \frac{p_qv_{\ui}}{v_{\ui} - p_q} - \frac{p_{q-1}v_{\ui}}{v_{\ui} - p_{q-1}}\right) \ge  0.
\end{align*}
Here, the first inequality follows from the induction hypothesis and \cref{lemma:c1}; The last inequality follows from \cref{eq:D} and the fact that $p_qv_{\ui}/(v_{\ui} - p_q)$ is increasing with respect to $q$. Therefore, the statement holds for $k = s+1$, completing the proof.
\end{proof}

\paragraph{Proof of \cref{lemma:c1}.} We now prove \cref{lemma:c1}.
\begin{proof}
    For the first point, consider the case when $\ui = n + 1$. By definition, we have $S_q = 0$ and $v_{\ui} = +\infty$, so the claim holds trivially. Now consider the case when $\ui < n + 1$. By construction of $\ui$, the mass $x_i^*$ is not fully exhausted after removing the market segments $\bx_{|F| - k + 1}, \bx_{|F| - k + 2}, \ldots, \bx_{|F|}$. According to \cref{alg:construction} and \cref{eq:construction-PS-max}, each equal-revenue market $\bx^{\equal}$ generated in \cref{line:construction-equation} of \cref{alg:construction} has support that includes $v_{\ui}$. As a result, it follows that $R_{\bx_q}(v_{\ui}) = R_{\bx_q}(p_q)$ for all $q \in \{|F| - k + 1, \ldots, |F|\}$. Hence,
    \begin{equation} \label{eq:proof-lemma-c1}
        v_{\ui} S_q = R_{\bx_q}(v_{\ui}) = R_{\bx_q}(p_q) = p_q(L_q + S_q),
    \end{equation}
    which shows $S_q = p_q L_q / (v_{\ui} - p_q)$.

    For the second point, since the instructed price for each $\bx_q'$ with $q \in \{|F| - k + 1, \ldots, |F|\}$ is $p_q$, it follows that $R_{\bx_q'}(v_{\ui}) = R_{\bx_q'}(p_q)$. Repeating the derivation in \cref{eq:proof-lemma-c1}, we obtain
    \[
    S_q' \le L_q' \cdot \frac{p_q}{v_{\ui} - p_q}.
    \]

    For the third point, by the definition of $\ui$ and the design of \cref{alg:construction} together with \cref{eq:construction-PS-max}, the construction ensures that $\ZPP$ exhausts all probability mass in $\bx^*$ over the interval $[p_{|F| - k + 1}, v_{\ui})$ via the markets $\{\bx_{|F| - k + 1}, \ldots, \bx_{|F|}\}$. This implies that
    \[
    \sum_{q = |F| - k + 1}^{|F|} L_q = \sum_{i = r - k + 1}^{\ui - 1} x_i^*.
    \]
    By the definition of $L_q'$, the total lower mass allocated in $\Z_0$ cannot exceed the available mass on values below $v_{\ui}$, and thus
    \[
    \sum_{q = |F| - k + 1}^{|F|} L_q \ge \sum_{q = |F| - k + 1}^{|F|} L_q',
    \]
    which implies $\sum_{q = |F| - k + 1}^{|F|} D_q \ge 0$.
\end{proof}

\subsubsection{Proof of \cref{thrm:passive-PS}: Feasibility} \label{sec:proof-passive-PS-feasibility}
Using the optimality of $\PS$, we have the following lemma showing that $\bx^{\remain} = \boldsymbol{0}$, so that the output of \cref{alg:construction} is \Fvalid.

\begin{proposition} \label{lemma:passive-PS-x-equals-0}
    $\bx^{\remain} = \boldsymbol{0}$, and thus $\ZPP$ constitutes an \Fvalid \scheme on $\bx^*$.
\end{proposition}
\begin{proof}
    Suppose, for contradiction, that $\bx^{\remain} \ne \boldsymbol{0}$, and let $\ZPP$ be the \Fvalid scheme on the residual market $\bx^* - \bx^{\remain}$. By construction, \cref{alg:construction} exhausts all probability mass on any value $v \in F$. Let $v' \in \support(\bx^{\remain})$. Since $v' \notin F$, \cref{eq:construction-PS-max} and \cref{alg:construction} ensure that the support of every equal-revenue market $\bx^{\equal}$ constructed in \cref{line:construction-equation} must include $v'$. Therefore, for all $(\by, p) \in \ZPP$, it holds that $R_{\by}(v') = R_{\by}(p)$.

    Since $F$ is $\bx^*$-feasible, let $\Z_1$ be any \Fvalid scheme on $\bx^*$. Then we have:
    \begin{align*}
        R_{\bx^*}(v') &= R_{\bx^{\remain}}(v') + \sum_{(\by, p) \in \ZPP} R_{\by}(v') \\
        &= R_{\bx^{\remain}}(v') + \sum_{(\by, p) \in \ZPP} R_{\by}(p) \\
        &= R_{\bx^{\remain}}(v') + \PS(\ZPP) \\
        &\ge R_{\bx^{\remain}}(v') + \PS(\Z_1) \tag{by \cref{thrm:passive-ps-ps}} \\
        &> \PS(\Z_1) \tag{since $v' \in \support(\bx^{\remain})$ and $R_{\bx^{\remain}}(v') > 0$} \\
        &\ge R_{\bx^*}(v').
    \end{align*}
    This yields a contradiction: $R_{\bx^*}(v') > R_{\bx^*}(v')$. Hence, it must be that $\bx^{\remain} = \boldsymbol{0}$.
\end{proof}

\subsubsection{Proof of \cref{thrm:passive-PS}: Consumer Surplus Minimization} \label{sec:proof-passive-PS-CS}
We now show that the output $\ZPP$ of \cref{alg:construction} minimizes consumer surplus among all regulated \schemes when \cref{eq:construction-PS-max} is used in \cref{line:construction-equation}. The result follows from the following lemma.

\begin{lemma}\label{lemma:proof-passive-PS-total-welfare}
    For any \Fvalid \scheme $\Z'$ of the market $\bx^*$, there exists an \Fvalid \scheme $\Z$ such that
    \[
    \PS(\Z) \ge \PS(\Z'), \quad \CS(\Z) = \CS(\Z'), \quad \text{and} \quad \PS(\Z) + \CS(\Z) = \sum_{i = \ell}^n v_i x_i^*.
    \]
\end{lemma}
\begin{proof}
    According to \cref{defn:standard-form}, without loss of generality, we may assume that $\Z'$ is in standard form and is given by $\Z' = \{(\bx'_q, p_q)\}_{q \in [|F|]}$, where $p_q = v_{\ell + q - 1}$. We now construct a new scheme $\Z$ based on $\Z'$ as follows.

    For each $(\bx'_q, p_q) \in \Z'$, we create $q$ new market segments $\{(\by_{q'}^{(q)}, p_{q'}^{(q)})\}_{q' \in [q]}$ defined by:
    \[
    \begin{aligned}
        y_{q',i}^{(q)} &= 
        \begin{cases}
            x'_{q,i}, & \text{if } i = \ell + q' - 1, \\
            0, & \text{otherwise}
        \end{cases}
        && \forall q' \in [q - 1],\ i \in [n], \\
        y_{q,i}^{(q)} &= 
        \begin{cases}
            x'_{q,i}, & \text{if } i \notin \{\ell, \ell + 1, \ldots, \ell + q - 1\}, \\
            0, & \text{otherwise}
        \end{cases}
        && \forall i \in [n], \\
        p_{q'}^{(q)} &= v_{\ell + q' - 1}.
    \end{aligned}
    \]
    It is easy to verify that
    \[
    \sum_{q' \in [q]} \by_{q'}^{(q)} = \bx'_q \quad \text{and} \quad p_{q'}^{(q)} \in \optPrice(\by_{q'}^{(q)}), \forall q' \in [q].
    \]
    In addition, by construction, we have
    \[
    \begin{aligned}
        & \ps(\by^{(q)}_{q'}, p^{(q)}_{q'}) = x'_{q, \ell+q'-1} \cdot v_{\ell+q'-1}, \quad \cs(\by^{(q)}_{q'}, p^{(q)}_{q'}) = 0, \quad \forall q' \in [q - 1] \\
        & \ps(\by^{(q)}_{q}, p^{(q)}_{q}) = \ps(\bx'_q, p_q) = v_{\ell+q-1}\cdot \sum_{i=\ell+q-1}^n x'_{q,i} \\
        & \cs(\by^{(q)}_{q'}, p^{(q)}_{q'}) = \cs(\bx'_q, p_q) = \sum_{i=\ell+q-1}^n x'_{q,i}v_i - v_{\ell+q-1}\cdot \sum_{i=\ell+q-1}^n x'_{q,i}.
    \end{aligned}
    \]
    As a result, we obtain:
    \[
    \begin{aligned}
        & \sum_{q' \in [q]} \ps(\by^{(q)}_{q'}, p^{(q)}_{q'}) \ge \ps(\bx'_q, p_q), \quad \sum_{q' \in [q]} \cs(\by^{(q)}_{q'}, p^{(q)}_{q'}) = \cs(\bx'_q, p_q) \\
        & \sum_{q' \in [q]} \left(\ps(\by^{(q)}_{q'}, p^{(q)}_{q'}) + \cs(\by^{(q)}_{q'}, p^{(q)}_{q'})\right) = \sum_{i=\ell}^n \bx'_{q,i}v_i.
    \end{aligned}
    \]
    Let $\Z$ denote the collection of all market segments $(\by_{q'}^{(q)}, p_{q'}^{(q)})$ for all $q \in [|F|]$ and $q' \in [q]$. Then the aggregate producer and consumer surplus under $\Z$ satisfy:
    \[
    \begin{aligned}
        \PS(\Z) & = \sum_{q \in |F|} \sum_{q' \in [q]} \ps(\by^{(q)}_{q'}, p^{(q)}_{q'}) \ge \sum_{q \in [|F|]} \ps(\bx'_q, p_q) = \PS(Z'), \\
        \CS(\Z) & = \sum_{q \in |F|} \sum_{q' \in [q]} \cs(\by^{(q)}_{q'}, p^{(q)}_{q'} = \sum_{q \in [|F|]} \cs(\bx'_q, p_q) = \CS(Z'), \\
        \PS(\Z) + \CS(\Z) & = \sum_{q \in |F|} \sum_{q' \in [q]} \left(\ps(\by^{(q)}_{q'}, p^{(q)}_{q'}) + \cs(\by^{(q)}_{q'}, p^{(q)}_{q'})\right) = \sum_{q \in |F|} \sum_{i=\ell}^n \bx'_{q,i}v_i = \sum_{i=\ell}^n x^*_i v_i.
    \end{aligned}
    \]
    This completes the construction, and the claim follows.
\end{proof}

Finally, we establish the consumer surplus and total welfare properties of $\ZPP$.

\begin{proposition}
    It holds that
    \[
    \CS(\ZPP) = \PassiveCSMin(\bx^*, F), \quad \PS(\ZPP) = \sum_{i=\ell}^n v_ix^*_i - \PassiveCSMin(\bx^*, F).
    \]
\end{proposition}
\begin{proof}
    By \cref{lemma:passive-PS-x-equals-0}, we have that $\ZPP$ is an \Fvalid \scheme of $\bx^*$. Moreover, from the construction in \cref{alg:construction} and the condition in \cref{eq:construction-PS-max}, it follows that in each market segment, consumers with valuations at least $v_{\ell}$ always purchase the good. Therefore, the total surplus generated by $\ZPP$ is:
    \[
    \SW(\ZPP) = \PS(\ZPP) + \CS(\ZPP) = \sum_{i = \ell}^n v_i x_i^*.
    \]
    
    Now consider any \Fvalid \scheme $\Z_0$ of $\bx^*$. By \cref{lemma:proof-passive-PS-total-welfare}, there exists an \Fvalid \scheme $\Z_1$ such that $\CS(\Z_1) = \CS(\Z_0)$ and $\PS(\Z_1) + \CS(\Z_1) = \sum_{i = \ell}^n v_i x_i^*$. Rearranging yields:
    \[
    \CS(\Z_0) = \CS(\Z_1) = \sum_{i = \ell}^n v_i x_i^* - \PS(\Z_1) \ge \sum_{i = \ell}^n v_i x_i^* - \PS(\ZPP) = \CS(\ZPP),
    \]
    where the inequality follows from \cref{thrm:passive-ps-ps}. This shows that the consumer surplus under $\ZPP$ is no greater than that under any other \Fvalid \scheme. Hence, $\CS(\ZPP) = \PassiveCSMin(\bx^*, F)$. This completes the proof.
\end{proof}

\subsection{Consumer-Surplus-Maximizing \SCHEME} \label{sect:proof-passive-CS}
Let $\ZPC$ be the output of \cref{alg:construction} to achieve the CS-maximizing \scheme. The following theorem holds.

\begin{theorem} \label{thrm:passive-CS}
    Given a market $\bx^*$ and an $\bx^*$-feasible contiguous price set $F = \{v_\ell, \ldots, v_r\}$, let $\ZPC$ denote the output and let $\bx^{\remain}$ be defined as in \cref{line:bx-remain} of \cref{alg:construction}, when \cref{eq:construction-CS-max} is used in \cref{line:construction-equation}. Then $\bx^{\remain} = 0$ so that $\ZPC$ is an \Fvalid \scheme on $\bx^*$. Further,
    \[
        \CS(\ZPC) = \sum_{i=\ell}^n v_ix^*_i - \UniRev(\bx^*), \quad \PS(\ZPC) = \UniRev(\bx^*).
    \]
\end{theorem}

\paragraph{Proof Sketch of \cref{thrm:passive-CS}.} 
We proceed in three steps. First, in \cref{sect:proof-passive-CS-basic}, we establish that the construction of $\ZPC$ satisfies the necessary conditions for minimizing consumer surplus. Then, \cref{sect:proof-passive-CS-feasibility} proves that $\ZPC$ constitutes a feasible \Fvalid \scheme. Finally, in \cref{sect:proof-passive-CS-final}, we show that $\ZPC$ achieves the targeted consumer and producer surplus guarantees.

\subsubsection{Proof of \cref{thrm:passive-CS}: Basic Properties} \label{sect:proof-passive-CS-basic}
We begin with the following lemmas, which characterize key properties of \cref{eq:construction-CS-max}, focusing on the relationship between $\optPrice(\bx^* - \bx^{\equal})$, $\optPrice(\bx^*)$, and the regulated price set $F$. The following lemmas cover all possible cases in \cref{eq:construction-CS-max}.

\begin{lemma} \label{lemma:proof-passive-CS-case-1}
    Suppose that $\optPrice(\bx^*) \cap F = \emptyset$, and define
    \[
    \bx^{\equal} = \argmax \left\{ \mass(\bx) : \bx \in \ERM(B(\bx^*, F)),\ \bx \le \bx^*,\ \optPrice(\bx^* - \bx) \supseteq \optPrice(\bx^*) \right\}
    \]
    (i.e., the first case in \cref{eq:construction-CS-max}). Suppose that $\bx^{\equal}$ is selected due to the binding constraint $\bx \le \bx^*$, in the sense that for any $\gamma > 0$, we have $(1+\gamma)\bx^{\equal} \not\le \bx^*$. Then it must hold that
    \[
    \optPrice(\bx^* - \bx^{\equal}) \supseteq \optPrice(\bx^*).
    \]
\end{lemma}
\begin{proof}
    By the constraint $\optPrice(\bx^* - \bx) \supseteq \optPrice(\bx^*)$ in \cref{eq:construction-CS-max}, and noting that the set 
    \[
    \left\{ \bx \in \ERM(B(\bx^*, F)) : \bx \le \bx^*, \optPrice(\bx^* - \bx) \supseteq \optPrice(\bx^*) \right\}
    \]
    is closed, the claim follows immediately.
\end{proof}

\begin{lemma} \label{lemma:proof-passive-CS-case-2}
    Suppose that $\optPrice(\bx^*) \cap F = \emptyset$, and define
    \[
    \bx^{\equal} = \argmax \left\{ \mass(\bx) : \bx \in \ERM(B(\bx^*, F)),\ \bx \le \bx^*,\ \optPrice(\bx^* - \bx) \supseteq \optPrice(\bx^*) \right\}
    \]
    (i.e., the first case in \cref{eq:construction-CS-max}). Suppose that $\bx^{\equal}$ is selected due to the binding constraint $\optPrice(\bx^* - \bx) \supseteq \optPrice(\bx^*)$, in the sense that for any $\gamma > 0$, we have $\optPrice(\bx^* - (1 + \gamma)\bx^{\equal}) \not\supseteq \optPrice(\bx^*)$. Then it must hold that
    \[
    \optPrice(\bx^* - \bx^{\equal}) \cap F \ne \emptyset \quad \text{and} \quad \optPrice(\bx^* - \bx^{\equal}) \supseteq \optPrice(\bx^*).
    \]
\end{lemma}
\begin{proof}

    Fix any $v_i \in \optPrice(\bx^*)$ and consider any $v_j \in \support(\bx^*)$. Observe the revenue gap:
    \[
    R_{\bx^* - (1 + \gamma)\bx^{\equal}}(v_j) - R_{\bx^* - (1 + \gamma)\bx^{\equal}}(v_i) 
    = R_{\bx^*}(v_j) - R_{\bx^*}(v_i) - (1 + \gamma)\left(R_{\bx^{\equal}}(v_j) - R_{\bx^{\equal}}(v_i)\right),
    \]
    which is a linear function of $\gamma$. In addition, note that $\optPrice(\bx^*) \subseteq B(\bx^*, F) \subseteq \support(\bx^*)$ by construction and the assumption $\optPrice(\bx^*)\cap F =\emptyset$. As a result, we observe the following:
    \begin{enumerate}
        \item If $v_j \in \optPrice(\bx^*)$, then $R_{\bx^*}(v_j) = R_{\bx^*}(v_i)$ and $R_{\bx^{\equal}}(v_j) = R_{\bx^{\equal}}(v_i)$. Therefore, $R_{\bx^* - (1 + \gamma)\bx^{\equal}}(v_j) = R_{\bx^* - (1 + \gamma)\bx^{\equal}}(v_i)$.
        \item If $v_j \in B(\bx^*, F) \setminus \optPrice(\bx^*)$, then $R_{\bx^*}(v_j) < R_{\bx^*}(v_i)$ while $R_{\bx^{\equal}}(v_j) = R_{\bx^{\equal}}(v_i)$. Consequently, $R_{\bx^* - (1 + \gamma)\bx^{\equal}}(v_j) < R_{\bx^* - (1 + \gamma)\bx^{\equal}}(v_i)$.
        \item If $v_j \in \support(\bx^*) \setminus B(\bx^*, F) \subseteq F$, then $R_{\bx^*}(v_j) < R_{\bx^*}(v_i)$ and $R_{\bx^{\equal}}(v_j) < R_{\bx^{\equal}}(v_i)$. Hence, there exists $\gamma_j$ such that:
        \[
        R_{\bx^* - (1 + \gamma)\bx^{\equal}}(v_j)
        \begin{cases}
            < R_{\bx^* - (1 + \gamma)\bx^{\equal}}(v_i), & \text{if } \gamma < \gamma_j, \\
            = R_{\bx^* - (1 + \gamma)\bx^{\equal}}(v_i), & \text{if } \gamma = \gamma_j, \\
            > R_{\bx^* - (1 + \gamma)\bx^{\equal}}(v_i), & \text{if } \gamma > \gamma_j.
        \end{cases}
        \]
        Note that $\optPrice(\bx^* - \bx^{\equal}) \supseteq \optPrice(\bx^*)$. Therefore, $R_{\bx^* - \bx^{\equal}}(v_j) \le R_{\bx^* - \bx^{\equal}}(v_i)$, which implies that $\gamma_j \ge 0$.
    \end{enumerate}
    For any $\gamma > 0$, since $\optPrice(\bx^* - (1 + \gamma)\bx^{\equal}) \not\supseteq \optPrice(\bx^*)$, there exists some $v_i \in \optPrice(\bx^*)$ such that $v_i \notin \optPrice(\bx^* - (1 + \gamma)\bx^{\equal})$. This in turn implies the existence of $v_j \in \support(\bx^*)$ with $R_{\bx^* - (1 + \gamma)\bx^{\equal}}(v_j) > R_{\bx^* - (1 + \gamma)\bx^{\equal}}(v_i)$. By the observations above, this situation can only occur in the third case, namely when there exists $v_j \in \support(\bx^*) \setminus B(\bx^*, F) \subseteq F$ with $\gamma_j < \gamma$. Taking the limit $\gamma \to 0^+$, we conclude that there must exist some $v_j \in F$ such that $\gamma_j = 0$.
    
    As a result, we have
    \[
    R_{\bx^* - (1 + \gamma)\bx^{\equal}}(v_j) > R_{\bx^* - (1 + \gamma)\bx^{\equal}}(v_i), \quad \forall v_i \in \optPrice(\bx^*),\ \forall \gamma > 0.
    \]
    Now consider $\gamma \to 0^+$. Since $\optPrice(\bx^* - \bx^{\equal}) \subseteq \optPrice(\bx^*)$ by construction, we must have:
    \[
    R_{\bx^* - \bx^{\equal}}(v_j) \le R_{\bx^* - \bx^{\equal}}(v_i), \quad \forall v_i \in \optPrice(\bx^*).
    \]
    But from the earlier inequality and continuity, we must also have equality:
    \[
    R_{\bx^* - \bx^{\equal}}(v_j) = R_{\bx^* - \bx^{\equal}}(v_i), \quad \forall v_i \in \optPrice(\bx^*),
    \]
    which implies $v_j \in \optPrice(\bx^* - \bx^{\equal})$ and thus $\optPrice(\bx^* - \bx^{\equal}) \cap F \ne \emptyset$.

    Finally, since the feasible set
    \[
    \left\{ \bx \in \ERM(B(\bx^*, F)) : \bx \le \bx^*,\ \optPrice(\bx^* - \bx) \supseteq \optPrice(\bx^*) \right\}
    \]
    is closed, the inclusion $\optPrice(\bx^* - \bx^{\equal}) \supseteq \optPrice(\bx^*)$ holds as well. This completes the proof.
\end{proof}

\begin{lemma} \label{lemma:proof-passive-CS-case-3}
    Suppose that $\optPrice(\bx^*) \cap F \ne \emptyset$, and define
    \[
    \bx^{\equal} = \argmax\left\{ \mass(\bx) : \bx \in \ERM(\support(\bx^*)),\ \bx \le \bx^* \right\}
    \]
    (i.e., the second case in \cref{eq:construction-CS-max}). Then:
    \[
    \optPrice(\bx^* - \bx^{\equal}) \cap F \ne \emptyset \quad \text{and} \quad \optPrice(\bx^* - \bx^{\equal}) \supseteq \optPrice(\bx^*).
    \]
\end{lemma}

\begin{proof}
    Let $v_i \in \optPrice(\bx^*) \cap F$. By the equal-revenue property (see \cref{eq:equal-revenue}), we have $R_{\bx^{\equal}}(v_i) = R_{\bx^{\equal}}(v_j)$ for all $v_j \in \support(\bx^*)$. Since $v_i \in \optPrice(\bx^*)$, it follows that for any $v_j \in \support(\bx^*)$,
    \[
    R_{\bx^* - \bx^{\equal}}(v_i) = R_{\bx^*}(v_i) - R_{\bx^{\equal}}(v_i) 
    \ge R_{\bx^*}(v_j) - R_{\bx^{\equal}}(v_j) = R_{\bx^* - \bx^{\equal}}(v_j).
    \]
    Therefore, $v_i \in \optPrice(\bx^* - \bx^{\equal})$, and we conclude that
    \[
    \optPrice(\bx^* - \bx^{\equal}) \cap F \supseteq \{v_i\} \ne \emptyset.
    \]

    Moreover, the inequality above holds for all $v_i \in \optPrice(\bx^*)$, implying that
    \[
    \optPrice(\bx^* - \bx^{\equal}) \supseteq \optPrice(\bx^*).
    \]
    This completes the proof.
\end{proof}

\subsubsection{Proof of \cref{thrm:passive-CS}: Feasibility} \label{sect:proof-passive-CS-feasibility}
\begin{proposition} \label{lemma:passive-CS-x-equals-0}
    $\bx^{\remain} = \boldsymbol{0}$, and hence $\ZPC$ constitutes an \Fvalid \scheme on $\bx^*$.
\end{proposition}

\begin{proof}
    Suppose, for contradiction, that $\bx^{\remain} \ne \boldsymbol{0}$. By construction in \cref{alg:construction}, it follows that $\support(\bx^{\remain}) \cap F = \emptyset$, and thus
    \[
    \optPrice(\bx^{\remain}) \cap F \subseteq \support(\bx^{\remain}) \cap F = \emptyset.
    \]

    According to \cref{lemma:proof-passive-CS-case-2,lemma:proof-passive-CS-case-3}, throughout the execution of \cref{alg:construction}, the equal-revenue market $\bx^{\equal}$ is never selected as either
    \[
    \argmax \left\{ \mass(\bx) : \bx \in \ERM(\support(\bx^*)),\ \bx \le \bx^* \right\}
    \]
    or
    \[
    \argmax \left\{ \mass(\bx) : \bx \in \ERM(B(\bx^*, F)),\ \bx \le \bx^*,\ \optPrice(\bx^* - \bx) \supseteq \optPrice(\bx^*) \right\},
    \]
    with the latter being selected due to the binding constraint $\optPrice(\bx^* - \bx) \supseteq \optPrice(\bx^*)$. Otherwise, it would follow from the respective lemmas that $\optPrice(\bx^{\remain}) \cap F \ne \emptyset$, contradicting our assumption.

    Therefore, in every iteration, the algorithm must have selected
    \[
    \bx^{\equal} = \argmax \left\{ \mass(\bx) : \bx \in \ERM(B(\bx^*, F)),\ \bx \le \bx^* \right\},
    \]
    which coincides exactly with the case specified in \cref{eq:construction-PS-max}. This implies that the behavior of the algorithm is identical to the version executed with \cref{eq:construction-PS-max} in \cref{line:construction-equation}. 

    By \cref{thrm:passive-PS}, we know that in this case the remainder must satisfy $\bx^{\remain} = \boldsymbol{0}$, contradicting our initial assumption. Hence, it must be that $\bx^{\remain} = \boldsymbol{0}$.
\end{proof}

\subsubsection{Proof of \cref{thrm:passive-CS}: Consumer Surplus Maximization and Producer Surplus Minimization} \label{sect:proof-passive-CS-final}
\begin{proposition} 
    It holds that
    \[
    \CS(\ZPC) = \sum_{i=\ell}^n v_i x^*_i - \UniRev(\bx^*), \quad \text{and} \quad \PS(\ZPC) = \UniRev(\bx^*).
    \]
\end{proposition}

\begin{proof}
    Let $v_i \in \optPrice(\bx^*)$ be any optimal price of the original market $\bx^*$. By \cref{lemma:proof-passive-CS-case-1,lemma:proof-passive-CS-case-2,lemma:proof-passive-CS-case-3}, throughout the execution of \cref{alg:construction}, the active set $A$ in the while-loop always satisfies $A \subseteq \optPrice(\bx^*)$. Therefore, in each iteration, the equal-revenue market $\bx^{\equal}$ selected by \cref{eq:construction-CS-max} must also satisfy $A \subseteq \bx^{\equal}$. This implies that every segment $(\by, p)$ in $\ZPC$ satisfies $R_{\by}(p) = R_{\by}(v_i)$ for some $v_i \in \optPrice(\bx^*)$. Hence,
    \[
    \PS(\ZPC) = \sum_{(\by, p) \in \ZPC} R_{\by}(p) = \sum_{(\by, p) \in \ZPC} R_{\by}(v_i) = \UniRev(\bx^*).
    \]

    Moreover, by construction and \cref{lemma:passive-CS-x-equals-0}, the algorithm fully allocates $\bx^*$, i.e., $\bx^{\remain} = \boldsymbol{0}$. Since each market segment guarantees that all consumers with valuations at least $v_{\ell}$ purchase the good, the total social welfare generated by $\ZPC$ is
    \[
    \SW(\ZPC) = \CS(\ZPC) + \PS(\ZPC) = \sum_{i=\ell}^n v_i x^*_i.
    \]
    Combining with the expression for producer surplus, we obtain
    \[
    \CS(\ZPC) = \sum_{i=\ell}^n v_i x^*_i - \UniRev(\bx^*),
    \]
    as claimed.
\end{proof}

\subsection{Social-Welfare-Minimizing \SCHEME} \label{sect:proof-passive-SW}

We now present the social-welfare-minimizing \scheme $\ZPMin$ and characterize the exact value of $\PassiveCSMin(\bx^*, F)$. We begin with the following definition.

\begin{definition}[$\bx^*$-Sub-Feasible Price Set] \label{defn:sub-feasible}
    Given a market $\bx^*$ and an $\bx^*$-feasible price set $F = \{v_\ell, \ldots, v_r\}$, for any $i \in \{\ell, \ldots, r\}$ and $0 \le \eta \le x^*_{i}$, we say that $(F, i, \eta)$ is an $\bx^*$-sub-feasible price set if there exists an \Fvalid \scheme $\Z = \{(\bx_q, p_q)\}_{q \in [Q]}$ such that:
    \begin{enumerate}
        \item For all $q \in [Q]$, $p_q \in \{v_i, v_{i+1}, \ldots, v_r\}$.
        \item $\sum_{q \in [Q]} \bbI[p_q = v_i] \cdot x_{q, i} \le \eta$.
    \end{enumerate}
    Such a \scheme $\Z$ is termed \emph{$(F,i,\eta)$-valid}.
\end{definition}

Intuitively, an $\bx^*$-sub-feasible price set corresponds to an \Fvalid \scheme that only uses prices from $\{v_i, \ldots, v_r\} \subseteq F$ and limits the total market volume assigned to the lowest price $v_i$.

A graphical illustration is shown in the left panel of \cref{fig:proof-SWMin}. In the example, the original price set $F = \{v_2, v_3, v_4\}$ is $\bx^*$-feasible, meaning optimal prices in an \Fvalid \scheme may lie in $F$. However, the set may be reduced to the shaded region $\{v_3, v_4\}$, with a constraint on the total mass allocated to $v_3$.

Given $\bx^*$ and an $\bx^*$-feasible price set $F = \{v_\ell, \ldots, v_r\}$, define:
\begin{equation} \label{eq:passive-SW-min-i0-eta0}
    \begin{aligned}
        i_0 & = \max\{i \in \{\ell, \ldots, r\} : \bx^* \text{ is } \{v_i, \ldots, v_r\}\text{-feasible} \}, \\
        \eta_0 & = \inf\{0 \le \eta \le x^*_{i_0} : (F, i_0, \eta) \text{ is } \bx^*\text{-sub-feasible} \}.
    \end{aligned}
\end{equation}

Since the set $\{ \eta : (F, i_0, \eta) \text{ is } \bx^*\text{-sub-feasible}\}$ is closed, the infimum is attained, and $(F, i_0, \eta_0)$ is $\bx^*$-sub-feasible. This pair identifies the “minimal” feasible subset of $F$ for $\bx^*$ under our constraints.

We now state our main result, which characterizes the bottom-left point in \cref{fig:summary}, showing that minimum consumer surplus and producer surplus are simultaneously achievable.

\begin{theorem} \label{thrm:passive-SW-min}
There exists an \Fvalid \scheme $\ZPMin$ for $\bx^*$ such that
\[
    \CS(\ZPMin) = \PassiveCSMin(\bx^*, F), \quad \PS(\ZPMin) = \UniRev(\bx^*).
\]
Furthermore,
\[
    \PassiveCSMin(\bx^*, F) = \eta_0 v_{i_0} + \sum_{j=i_0+1}^n x^*_j v_j - \UniRev(\bx^*).
\]
\end{theorem}

\paragraph{Proof Sketch of \cref{thrm:passive-SW-min}.}
Let
\begin{equation} \label{eq:tF}
    \tF = \{v_{i_0}, v_{i_0+1}, \ldots, v_r\}
\end{equation}
denote the reduced regulated price set. By construction, $\bx^*$ is $\tF$-feasible.

To implement the idea in \cref{sect:passive-sw-min}, we modify the set definition and construction rule from \cref{sect:passive-ps-max} as follows:
\begin{equation} \label{eq:tB}
    \tB(\bx^*, \tF) = 
    \begin{cases}
        B(\bx^*, \tF) \cup \{v_{i_0}\}, & \text{if } x^*_{i_0} > \eta_0, \\
        B(\bx^*, \tF), & \text{otherwise}.
    \end{cases}
\end{equation}
Then, $\bx^{\equal}$ and $p$ are defined based on the support of $\bx^*$ and $F$:
\begin{enumerate}
    \item \textbf{Case 1:} If $\max(\support(\bx^*) \cap \tF) > v_{i_0}$,
    \begin{equation} \label{eq:construction-PS-max-modified-1}
        \left\{
        \begin{aligned}
            \bx^{\equal} &= \argmax\left\{\mass(\bx) : \bx \in \ERM(\tB(\bx^*, \tF)),\ \bx \le \bx^*,\ x^*_{i_0} - x_{i_0} \ge \eta_0 \right\}, \\
            p &= \min\left(\support(\bx^{\equal}) \cap \{v_{i_0+1}, \ldots, v_r\}\right).
        \end{aligned}
        \right.
    \end{equation}
    The additional constraint $x^*_{i_0} - x_{i_0} \ge \eta_0$ ensures that at least $\eta_0$ mass remains on $v_{i_0}$, allowing us to construct future segments where $v_{i_0}$ is the optimal price.
    \item \textbf{Case 2:} If $\max(\support(\bx^*) \cap \tF) = v_{i_0}$,
    \begin{equation} \label{eq:construction-PS-max-modified-2}
        \left\{
        \begin{aligned}
            \bx^{\equal} &= \argmax\left\{ \mass(\bx) : \bx \in \ERM(\tB(\bx^*, \tF)),\ \bx \le \bx^* \right\}, \\
            p &= v_{i_0}.
        \end{aligned}
        \right.
    \end{equation}
    This corresponds to the standard construction process in \cref{eq:construction-PS-max}.
\end{enumerate}

In \cref{alg:construction}, \cref{line:construction-equation} invokes either \cref{eq:construction-PS-max-modified-1} or \cref{eq:construction-PS-max-modified-2}, depending on the relevant condition. Let $\ZPMin = \{(\bx_q, p_q)\}_{q \in [|F|]}$ denote the output of the algorithm under this modified construction. In \cref{sect:proof-passive-SW-min-feasibility}, we show that $\ZPMin$ is an $(F, i_0, \eta_0)$-valid \scheme of $\bx^*$. Then, in \cref{sect:proof-passive-SW-min-PS,sect:proof-passive-SW-min-CS}, we prove that $\ZPMin$ achieves the desired producer and consumer surplus guarantees.
\subsubsection{Proof of \cref{thrm:passive-SW-min}: Feasibility} \label{sect:proof-passive-SW-min-feasibility}
We first show that the output $\ZPMin$ is $(F, i_0, \eta_0)$-valid. Similar to the producer-surplus-maximizing construction in \cref{sect:proof-passive-PS}, we aim to ensure that the \scheme~$Z$ constructed by \cref{alg:construction} is $(F, i_0, \eta_0)$-valid when the while loop terminates, even if $\bx^{\remain} \ne \boldsymbol{0}$.

To this end, we introduce two auxiliary modifications—later shown to be unnecessary—that enforce the $(F, i_0, \eta_0)$-validity explicitly:

\begin{enumerate}
    \item To ensure that at most $\eta_0$ mass of $v_{i_0}$ is used as the optimal price, we modify \cref{eq:construction-PS-max-modified-2} to:
    \begin{equation} \label{eq:construction-PS-max-modified-2-new}
        \left\{
        \begin{aligned}
            \bx^{\equal} &= \argmax\left\{ \mass(\bx) : \bx \in \ERM(\tB(\bx^*, \tF)),\ \bx \le \bx^*,\ \sum_{(\by, p) \in \Z,\ p = v_{i_0}} y_{i_0} + x_{i_0} \le \eta_0 \right\}, \\
            p &= v_{i_0},
        \end{aligned}
        \right.
    \end{equation}
    where $\Z$ denotes the partial \scheme constructed thus far.

    \item We modify the while-loop condition in \cref{line:while} of \cref{alg:construction} from $\support(\bx^*) \cap F \ne \emptyset$ to:
    \begin{equation} \label{eq:while-condition}
        \support(\bx^*) \cap \tF \ne \emptyset \quad \text{and} \quad \sum_{(\by, p) \in \Z,\ p = v_{i_0}} y_{i_0} < \eta_0.
    \end{equation}
    This ensures the loop exits once no further segment can be added without violating $(F, i_0, \eta_0)$-validity.
\end{enumerate}

We now define $\ZPMin' = \{(\bx_q, p_q)\}_{q \in [|F|]}$ as the output of the algorithm with \cref{line:construction-equation} set according to \cref{eq:construction-PS-max-modified-1,eq:construction-PS-max-modified-2-new}. By construction, $\ZPMin'$ is $(F, i_0, \eta_0)$-valid. Similar to \cref{thrm:passive-ps-ps}, we could get the following lemma.
\begin{lemma} \label{lemma:passive-SW-min-ps}
    For any $(F, i_0, \eta_0)$-valid \scheme~$\Z$, we have
    \[
    \PS(\ZPMin') \ge \PS(\Z).
    \]
\end{lemma}
\begin{proof}
    Note that $\tB(\bx^*, \tF)$ in \cref{eq:tB} differs from $B(\bx^*, \tF)$ in \cref{eq:S} only in the item $v_{i_0}$. The result then follows directly by applying the same inductive argument used in the proof of \cref{thrm:passive-ps-ps} (see \cref{sec:proof-passive-PS-PS} for details).
\end{proof}

\begin{proposition}
    The output $\ZPMin'$ is a $(F, i_0, \eta_0)$-valid \scheme of $\bx^*$. Moreover, we have $\ZPMin' = \ZPMin$, implying that the additional modifications in \cref{eq:construction-PS-max-modified-2-new,eq:while-condition} are unnecessary.
\end{proposition}
\begin{proof}
    Let $\bx^{\remain}$ be the residual market after executing \cref{alg:construction} using \cref{eq:construction-PS-max-modified-1,eq:construction-PS-max-modified-2-new} and the modified while-loop condition in \cref{eq:while-condition}. Suppose, for contradiction, that $\bx^{\remain} > \boldsymbol{0}$. Let $v \in \support(\bx^{\remain})$. We show that $v \in \optPrice(\bx)$ for any $(\bx, p) \in \ZPMin'$ with $p \ge v_{i_0}$.
    
    \begin{enumerate}
        \item If $v \ne v_{i_0}$, then $v \in \support(\bx^*)$ throughout the while-loop. By the construction of $\tB(\bx^*, \tF)$ in \cref{eq:tB}, it follows that $v \in \support(\bx^{\equal})$ and hence $v \in \optPrice(\bx)$ for any $(\bx, p) \in \ZPMin'$ with $p \ge v_{i_0}$.

        \item If $v = v_{i_0}$, then during the while loop, we must have had $x^*_{i_0} > \eta_0$ whenever $\max(\support(\bx^*) \cap \tF) > v_{i_0}$. Therefore, $v \in \optPrice(\bx)$ for any $(\bx, p) \in \ZPMin'$ with $p > v_{i_0}$. Since $v = v_{i_0}$, it follows that $v \in \optPrice(\bx)$ also for $p = v_{i_0}$, and thus for all $p \ge v_{i_0}$.
    \end{enumerate}
    In addition, by construction, it holds that $\bx = \boldsymbol{0}$ for any $(\bx, p) \in \ZPMin'$ with $p < v_{i_0}$. Now, using the same argument as in the proof of \cref{lemma:passive-PS-x-equals-0}, let $\Z_1$ be any $(F, i_0, \eta_0)$-valid scheme for $\bx^*$. Then:
    \begin{align*}
        R_{\bx^*}(v) &= R_{\bx^{\remain}}(v) + \sum_{(\by, p) \in \ZPMin'} R_{\by}(v) \\
        &= R_{\bx^{\remain}}(v) + \sum_{(\by, p) \in \ZPMin'} R_{\by}(p) \\
        &= R_{\bx^{\remain}}(v) + \PS(\ZPMin') \\
        &\ge R_{\bx^{\remain}}(v) + \PS(\Z_1) \tag{by \cref{lemma:passive-SW-min-ps}} \\
        &> \PS(\Z_1) \tag{since $v \in \support(\bx^{\remain})$ and $R_{\bx^{\remain}}(v) > 0$} \\
        &\ge R_{\bx^*}(v).
    \end{align*}
    This leads to a contradiction: $R_{\bx^*}(v) > R_{\bx^*}(v)$. Therefore, we must have $\bx^{\remain} = \boldsymbol{0}$, and hence $\ZPMin'$ is a $(F, i_0, \eta_0)$-valid \scheme for $\bx^*$.

    Finally, since $x^{\remain}_{i_0} = 0$, the additional constraint $\sum_{(\by, p) \in \Z,\ p = v_{i_0}} y_{i_0} + x_{i_0} \le \eta_0$ in \cref{eq:construction-PS-max-modified-2-new} is vacuously satisfied and thus unnecessary. Moreover, because $\bx^{\remain} = \boldsymbol{0}$ implies $\support(\bx^*) \cap F = \emptyset$, the modification to the while-loop condition in \cref{eq:while-condition} is also unnecessary. Hence, $\ZPMin = \ZPMin'$, completing the proof.
\end{proof}

\subsubsection{Proof of \cref{thrm:passive-SW-min}: Producer Surplus Minimization} \label{sect:proof-passive-SW-min-PS}
As in the proof of \cref{thrm:passive-CS} in \cref{sect:passive-cs-max}, we aim to show that there exists an optimal price $p^*$ of $\bx^*$ such that $p^* \in \optPrice(\bx)$ for every $(\bx, p) \in \ZPMin$. We establish this through a case analysis, formalized in the following lemmas.

\begin{lemma} \label{lemma:passive-SW-min-pstar-1}
    When $|\tF| = 1$ (i.e., $\tF = \{v_{i_0}\}$), we have $v_{i_0} \in \optPrice(\bx)$ for every $(\bx, p) \in \ZPMin$.
\end{lemma}
\begin{proof}
    By definition, $\tF$ is $\bx^*$-feasible. Since $\tF$ contains only one price, all segments in $\ZPMin$ must be zero except for a single segment of the form $(\bx^*, v_{i_0})$. The result then follows immediately.
\end{proof}

\begin{lemma} \label{lemma:passive-SW-min-pstar-2}
    Suppose $|\tF| > 1$ and $\ZPMin = \{(\bx_q, p_q)\}_{q \in [|F|]}$ with $p_q = v_{\ell + q - 1}$. Let $q_0 = i_0 + 2 - \ell$. If $v_{i_0} \in \optPrice(\bx_{q_0})$, then $v_{i_0} \in \optPrice(\bx_q)$ for all $q \in [|F|]$.
\end{lemma}
\begin{proof}
    By the construction of $\tB(\bx^*, \tF)$ in \cref{eq:tB,eq:construction-PS-max-modified-1,eq:construction-PS-max-modified-2}, the fact that $v_{i_0} \in \optPrice(\bx_{q_0})$ implies that $v_{i_0} \in \optPrice(\bx_q)$ for all $q \ge q_0$. Since $p_{q_0 - 1} = v_{i_0}$ by the definition of $q_0$, we indeed have $v_{i_0} \in \optPrice(\bx_{q_0 - 1})$. Moreover, because $\ZPMin$ is $(F, i_0, \eta_0)$-valid, we know that $\bx_q = \boldsymbol{0}$ for all $q < q_0 - 1$. In these cases, $v_{i_0} \in \optPrice(\bx_q)$ holds vacuously. Hence, $v_{i_0} \in \optPrice(\bx_q)$ for all $q \in [|F|]$. This completes the proof.
\end{proof}

\begin{lemma} \label{lemma:passive-SW-min-pstar-3}
    Suppose $|\tF| > 1$ and let $\ZPMin = \{(\bx_q, p_q)\}_{q \in [|F|]}$ with $p_q = v_{\ell + q - 1}$. Define $q_0 = i_0 + 2 - \ell$. If $v_{i_0} \notin \optPrice(\bx_{q_0})$, then there exists some $p^* \in V$ such that $p^* \in \optPrice(\bx_q)$ for all $q \in [|F|]$.
\end{lemma}
\begin{proof}
    By construction, we have $v_{i_0} \in \optPrice(\bx_{q_0 - 1})$. We now show that the optimal price of $\bx_{q_0 -1}$ is not unique.

    Suppose, for contradiction, that $\optPrice(\bx_{q_0 - 1}) = \{p_{q_0 - 1}\} = \{v_{i_0}\}$ is unique. We will then construct an alternative \scheme that is still \Fvalid but leads to a strictly smaller $\eta_0$—contradicting the minimality of $\eta_0$ in \cref{eq:passive-SW-min-i0-eta0}. Consider two cases:

    \textbf{Case 1:} $\support(\bx_{q_0}) \cap \{v_1, \dots, v_{i_0 - 1}\} = \emptyset$.  
    Define alternative segments:
    \[
    x'_{q_0-1, i} = \begin{cases}
        x_{q_0-1, i} & \text{if } i \ne i_0, \\
        x_{q_0-1, i} - \varepsilon & \text{if } i = i_0,
    \end{cases}
    \quad
    x'_{q_0, i} = \begin{cases}
        x_{q_0, i} & \text{if } i \ne i_0, \\
        x_{q_0, i} + \varepsilon & \text{if } i = i_0.
    \end{cases}
    \]

    \textbf{Case 2:} $\support(\bx_{q_0}) \cap \{v_1, \dots, v_{i_0 - 1}\} \ne \emptyset$.  
    Let $i_1$ be the maximal index in this intersection. Define:
    \[
    x'_{q_0-1, i} = \begin{cases}
        x_{q_0-1, i} & \text{if } i \notin \{i_0, i_1\}, \\
        x_{q_0-1, i} - \varepsilon & \text{if } i = i_0, \\
        x_{q_0-1, i} + \varepsilon & \text{if } i = i_1,
    \end{cases}
    \quad
    x'_{q_0, i} = \begin{cases}
        x_{q_0, i} & \text{if } i \notin \{i_0, i_1\}, \\
        x_{q_0, i} + \varepsilon & \text{if } i = i_0, \\
        x_{q_0, i} - \varepsilon & \text{if } i = i_1.
    \end{cases}
    \]

    In both cases, for sufficiently small $\varepsilon > 0$, we have $p_{q_0 - 1} \in \optPrice(\bx'_{q_0-1})$ and $p_{q_0} \in \optPrice(\bx'_{q_0})$. Thus, replacing $(\bx_{q_0-1}, p_{q_0-1})$ and $(\bx_{q_0}, p_{q_0})$ with $(\bx'_{q_0-1}, p_{q_0-1})$ and $(\bx'_{q_0}, p_{q_0})$ yields a new \Fvalid \scheme with strictly smaller $\eta_0$, contradicting minimality.

    Therefore, the optimal price of $\bx_{q_0 - 1}$ is not unique: there exists $p \ne v_{i_0}$ such that $p \in \optPrice(\bx_{q_0 - 1})$. Moreover, by construction, $x_{q_0 - 1, i} = 0$ for all $v_i \in \{v_{i_0+1}, \dots, v_r\}$, so $p \notin \{v_{i_0+1}, \dots, v_r\}$. Hence, $p \in V \setminus \tF$.

    From the construction in \cref{eq:tB,eq:construction-PS-max-modified-1,eq:construction-PS-max-modified-2}, it follows that $p \in \optPrice(\bx_q)$ for all $q \ge q_0$. Since $\ZPMin$ is $(F, i_0, \eta_0)$-valid, we also have $\bx_q = \boldsymbol{0}$ for all $q < q_0 - 1$, and thus $p \in \optPrice(\bx_q)$ vacuously.

    Hence, $p \in \optPrice(\bx_q)$ for all $q \in [|F|]$. This completes the proof.
\end{proof}

Based on the lemmas above, we conclude that the producer surplus achieved by $\ZPMin$ equals the maximum uniform-price revenue of the market $\bx^*$, i.e., $\PS(\ZPMin) = \UniRev(\bx^*)$.
\begin{proposition} \label{prop:passive-SW-min-PS}
    It holds that
    \[
    \PS(\ZPMin) = \UniRev(\bx^*).
    \]
\end{proposition}
\begin{proof}
    Let $\ZPMin = \{(\bx_q, p_q)\}_{q \in [|F|]}$ with $p_q = v_{\ell + q - 1}$. By \cref{lemma:passive-SW-min-pstar-1,lemma:passive-SW-min-pstar-2,lemma:passive-SW-min-pstar-3}, there exists a price $p^* \in \optPrice(\bx^*)$ such that $p^* \in \optPrice(\bx_q)$ for all $q \in [|F|]$. Therefore,
    \[
    \PS(\ZPMin) = \sum_{q=1}^{|F|} R_{\bx_q}(p_q) = \sum_{q=1}^{|F|} R_{\bx_q}(p^*) = R_{\bx^*}(p^*) = \UniRev(\bx^*),
    \]
    where the last equality follows from the definition of $\UniRev(\bx^*)$ as the maximum revenue of $\bx^*$ under uniform pricing. This completes the proof.
\end{proof}

\subsubsection{Proof of \cref{thrm:passive-SW-min}: Consumer Surplus Minimization} \label{sect:proof-passive-SW-min-CS}
Let $\ZPP$ denote the output of \cref{alg:construction} when \cref{eq:construction-PS-max} is used in \cref{line:construction-equation}. By \cref{thrm:passive-PS}, we know that $\ZPP$ is \Fvalid on $\bx^*$. We now show that $\ZPP$ and $\ZPMin$ share a similar structure.

\begin{lemma} \label{lemma:passive-SW-min-x-equals-y-large}
    Suppose $\ZPMin = \{(\bx_q, p_q)\}_{q \in [|F|]}$ and $\ZPP = \{(\by_q, p_q)\}_{q \in [|F|]}$ with $p_q = v_{\ell + q - 1}$. Then
    \[
        x_{q,i} = y_{q,i}, \quad \forall\, q \in [|F|] \text{ such that } p_q > v_{i_0},\; \forall\, i \in \{i_0 + 1, i_0 + 2, \dots, n\}.
    \]
\end{lemma}

\begin{proof}
    Define
    \[
    C = \{v_{i_0+1}, v_{i_0+2}, \dots, v_n\}.
    \]
    Let $\{\bx^{\equal}_j\}_{j \in [m_1]}$ be the equal-revenue markets constructed by \cref{alg:construction} using \cref{eq:construction-PS-max-modified-1,eq:construction-PS-max-modified-2} in \cref{line:construction-equation}. Similarly, let $\{\by^{\equal}_j\}_{j \in [m_2]}$ be the markets constructed using \cref{eq:construction-PS-max}.

    We group $\{\bx^{\equal}_j\}_{j \in [m_1]}$ according to their intersection with $C$. Let $\{\bx'_j\}_{j \in [m'_1]}$ be the resulting grouped markets, defined iteratively as follows:

    \begin{enumerate}
        \item Let $\bx'_1 = \sum_{j \in [m_1]} \bbI[\support(\bx^{\equal}_j) \cap C = \support(\bx^{\equal}_1) \cap C] \cdot \bx^{\equal}_j$.
        \item Let $t_1$ be the first index $i$ such that $\support(\bx^{\equal}_i) \cap C \ne \support(\bx^{\equal}_1) \cap C$. Define
        \[
        \bx'_2 = \sum_{j \in [m_1]} \bbI[\support(\bx^{\equal}_j) \cap C = \support(\bx^{\equal}_{t_1}) \cap C] \cdot \bx^{\equal}_j.
        \]
        \item Let $t_2$ be the first index $i > t_1$ such that $\support(\bx^{\equal}_i) \cap C \ne \support(\bx^{\equal}_{t_1}) \cap C$, and define $\bx'_3$ analogously.
        \item Continue this process until all $\bx^{\equal}_j$ are grouped into $m'_1$ disjoint aggregated markets, each corresponding to a unique subset of $C$.
    \end{enumerate}

    Similarly, we can group $\{\by^{\equal}_j\}_{j \in [m_2]}$ based on $\support(\by^{\equal}_j) \cap C$ to obtain $\{\by'_j\}_{j \in [m'_2]}$.

    By the construction of $B(\bx^*, F)$ in \cref{eq:S} and $\tB(\bx^*, \tF)$ in \cref{eq:tB}, we have
    \[
    B(\bx^*, F) \cap C = \tB(\bx^*, \tF) \cap C.
    \]
    Moreover, by the definition of equal-revenue markets in \cref{eq:equal-revenue}, we observe that for any subsets $H_1, H_2 \subseteq \{v_1, v_2, \dots, v_{i_0}\}$ and any subset $J \subseteq \{v_{i_0+1}, v_{i_0+2}, \dots, v_n\}$, the mass of $\bx^{H_1 \cup J}$ (defined in \cref{eq:equal-revenue}) on $C$ is proportional to the mass of $\bx^{H_2 \cup J}$ on $C$. Formally,
    \[
    \forall\, i \in \{i_0 + 1, i_0 + 2, \dots, n\}, \quad \frac{x^{H_1 \cup J}_i}{x^{H_2 \cup J}_i} \text{ is constant}.
    \]
    Since both $\{\bx'_j\}_{j \in [m'_1]}$ and $\{\by'_j\}_{j \in [m'_2]}$ form segmentations of the same market $\bx^*$, it follows that
    \[
    m'_1 = m'_2 \quad \text{and} \quad x'_{j,i} = y'_{j,i}, \quad \forall\, j \in [m'_1],\ i \in \{i_0 + 1, i_0 + 2, \dots, n\}.
    \]
    As a result, for any $q \in [|F|]$ such that $p_q > v_{i_0}$ and for any $i \in \{i_0 + 1, \dots, n\}$, we have
    \[
    x_{q,i} = \sum_{j \in [m'_1]} \bbI\left[\min(\support(\bx'_j) \cap C) = v_i\right] \cdot x'_{j,i}
    = \sum_{j \in [m'_1]} \bbI\left[\min(\support(\by'_j) \cap C) = v_i\right] \cdot y'_{j,i}
    = y_{q,i}.
    \]
    This completes the proof.
\end{proof}

\begin{lemma} \label{lemma:passive-SW-min-x-equals-y-small}
    Suppose $\ZPMin = \{(\bx_q, p_q)\}_{q \in [|F|]}$ and $\ZPP = \{(\by_q, p_q)\}_{q \in [|F|]}$ with $p_q = v_{\ell + q - 1}$. Then
    \[
        x_{q,i} = y_{q,i} = 0, \quad \forall\, q \in [|F|] \text{ such that } p_q < v_{i_0},\; \forall\, i \in \{i_0 + 1, i_0 + 2, \dots, n\}.
    \]
\end{lemma}
\begin{proof}
    For $\ZPMin$, by \cref{lemma:passive-SW-min-ps}, we have that $\bx_q = \boldsymbol{0}$ for all $q \in [|F|]$ such that $p_q < v_{i_0}$. Therefore, $x_{q,i} = 0$ for all $i \in \{i_0 + 1, i_0 + 2, \dots, n\}$.

    The same conclusion holds for $\ZPP$, since $\bx^*$ is $\tF$-feasible with $\tF = \{v_{i_0}, v_{i_0+1}, \dots, v_n\}$. Thus, the construction ensures that for any $q$ with $p_q < v_{i_0}$, we have $\by_q = \boldsymbol{0}$, and hence $y_{q,i} = 0$ for all $i > i_0$. This completes the proof.
\end{proof}

\begin{lemma} \label{lemma:passive-SW-min-x-equals-y-middle}
    Suppose $\ZPMin = \{(\bx_q, p_q)\}_{q \in [|F|]}$ and $\ZPP = \{(\by_q, p_q)\}_{q \in [|F|]}$ with $p_q = v_{\ell + q - 1}$. Then
    \[
        x_{q,i} = y_{q,i}, \quad \forall\, q \in [|F|] \text{ such that } p_q = v_{i_0},\; \forall\, i \in \{i_0 + 1, i_0 + 2, \dots, n\}.
    \]
\end{lemma}
\begin{proof}
    Let $q_0$ be the index such that $p_{q_0} = v_{i_0}$. Then, by \cref{lemma:passive-SW-min-x-equals-y-large,lemma:passive-SW-min-x-equals-y-small}, it holds that for all $i \in \{i_0+1, \dots, n\}$,
    \[
    x_{q_0,i} = x^*_i - \sum_{\substack{q \in [|F|] \\ p_q < v_{i_0}}} x_{q,i} - \sum_{\substack{q \in [|F|] \\ p_q > v_{i_0}}} x_{q,i}
    = x^*_i - \sum_{\substack{q \in [|F|] \\ p_q < v_{i_0}}} y_{q,i} - \sum_{\substack{q \in [|F|] \\ p_q > v_{i_0}}} y_{q,i}
    = y_{q_0,i}.
    \]
    This completes the proof.
\end{proof}

Now we can show that the consumer surplus of $\ZPMin$ is exactly the same as that of $\ZPP$, implying that $\CS(\ZPMin) = \PassiveCSMin(\bx^*, F)$.

\begin{proposition}
    Suppose $\ZPMin = \{(\bx_q, p_q)\}_{q \in [|F|]}$ and $\ZPP = \{(\by_q, p_q)\}_{q \in [|F|]}$ with $p_q = v_{\ell + q - 1}$. Then
    \[
    \CS(\ZPMin) = \CS(\ZPP) = \PassiveCSMin(\bx^*, F).
    \]
    In addition,
    \[
    \PassiveCSMin(\bx^*, F) = \eta_0 v_{i_0} + \sum_{j=i_0+1}^n x^*_j v_j - \UniRev(\bx^*).
    \]
\end{proposition}

\begin{proof}
    Combining \cref{lemma:passive-SW-min-x-equals-y-large,lemma:passive-SW-min-x-equals-y-small,lemma:passive-SW-min-x-equals-y-middle}, we have
    \begin{equation} \label{eq:passive-SW-min-x-equals-y}
        x_{q,i} = y_{q,i}, \quad \forall\, q \in [|F|],\; \forall\, i \in \{i_0 + 1, i_0 + 2, \dots, n\}.
    \end{equation}

    As a result, by the construction of $\ZPMin$ and $\ZPP$, only consumers with valuations at least $v_{r+1}$ can obtain positive surplus. Therefore,
    \[
    \begin{aligned}
        \CS(\ZPMin) &= \sum_{q=1}^{|F|} \sum_{i=\ell}^n (v_i - v_{\ell+q-1}) x_{q,i}
        = \sum_{q=1}^{|F|} \sum_{i=r+1}^n (v_i - v_{\ell+q-1}) x_{q,i}, \\
        \CS(\ZPP) &= \sum_{q=1}^{|F|} \sum_{i=\ell}^n (v_i - v_{\ell+q-1}) y_{q,i}
        = \sum_{q=1}^{|F|} \sum_{i=r+1}^n (v_i - v_{\ell+q-1}) y_{q,i}.
    \end{aligned}
    \]
    The equality $\CS(\ZPMin) = \CS(\ZPP)$ follows directly from \cref{eq:passive-SW-min-x-equals-y}. Since $\CS(\ZPP) = \PassiveCSMin(\bx^*, F)$ by \cref{thrm:passive-PS}, it holds that $\CS(\ZPMin) = \PassiveCSMin(\bx^*, F)$. Combining with \cref{prop:passive-SW-min-PS} and noting that $\CS(\ZPMin) + \PS(\ZPMin) = \eta_0 v_{i_0} + \sum_{j=i_0+1}^n x^*_j v_j$ by construction, we obtain
    \[
    \PassiveCSMin(\bx^*, F) = \eta_0 v_{i_0} + \sum_{j=i_0+1}^n x^*_j v_j - \UniRev(\bx^*),
    \]
    which completes the proof.
\end{proof}

\section{Active Intermediary Model And Results} \label{sec:active-inter}
\subsection{Model and Preliminaries}

In the active intermediary setting, the seller is constrained to post prices within the regulated set $F = \{v_{\ell}, v_{\ell+1}, \dots, v_r\}$. Thus, the intermediary only needs to ensure that the instructed price is one of the revenue-maximizing prices \emph{within} $F$, making the feasibility condition weaker than the incentive compatibility constraint in \cref{def:inter_IC}.

We use the notation
\[
\fopt(\bx) = \arg\max_{v \in F} R_\bx(v)
\]
to denote the set of revenue-maximizing prices for market $\bx$ restricted to the feasible set $F$.

\begin{definition}[$F$-instructed \SCHEME]
    For a contiguous price set $F \subseteq V$, we say that a \scheme $\Z = \{(\bx_q, p_q)\}_{q \in [Q]}$ is $F$-instructed if for all $q \in [Q]$, $p_q \in \fopt(\bx_q)$.
\end{definition}

The $F$-instructed property imposes a weaker requirement than $F$-validity (\cref{def:inter_IC}). In an $F$-instructed scheme, the price $p_q$ is optimal \emph{within} $F$, whereas in an $F$-valid scheme, the price must be optimal over the entire value set $V$ \emph{and} fall inside $F$. As a result, any market segmentation admits at least one set of instructed prices that make the scheme $F$-instructed, but not necessarily $F$-valid.

Accordingly, the central question in the active intermediary scenario is:
\begin{quote}
    \emph{Among all $F$-instructed \schemes for a given market $\bx^*$, what is the achievable region of consumer surplus and producer surplus?}
\end{quote}

\subsection{Results}
Since all $p \in \fopt(\bx^*, F)$ yield the same revenue, we define
\[
R^F_{\uni}(\bx^*) = R_{\bx^*}(v^*)
\]
for any $v^* \in \fopt(\bx^*, F)$. Given that all instructed prices in an $F$-instructed scheme lie within $F$, buyers with value $v_i$ exceeding $\max(F)$ must receive at least $v_i - \max(F)$ in surplus. We denote this lower bound by
\[
\ActiveCSMin(\bx^*, F) = \sum_{i=r+1}^n x_i^* \cdot (v_i - v_r).
\]
Similar to the passive intermediary model, the set of achievable (consumer surplus, producer surplus) pairs is characterized by three inequalities:

\begin{theorem} \label{thm:RSeller_base_bounds}
    Let $\bx^*$ be a market and $F = \{v_\ell, v_{\ell+1}, \dots, v_r\} \subseteq V$ be a contiguous price set. Then any achievable pair $(x, y)$ under an $F$-instructed \scheme satisfies:
    \begin{enumerate}
        \item $x + y \le \PassiveSWMax(\bx^*, F)$,
        \item $x \ge \ActiveCSMin(\bx^*, F)$,
        \item $y \ge R^F_{\uni}(\bx^*)$.
    \end{enumerate}
\end{theorem}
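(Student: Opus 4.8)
The plan is to prove the two inclusions separately: every $F$-instructed \scheme lands in the stated triangle, and conversely every point of the triangle is realized by some $F$-instructed \scheme. The first inclusion is immediate from $p_q\in F$ and $p_q\in\fopt(\bx_q)$. Fix any $v^*\in\fopt(\bx^*)$, so $R_{\uni}^F(\bx^*)=R_{\bx^*}(v^*)$. Since $p_q\ge\min(F)=v_\ell$, no buyer of value below $v_\ell$ transacts, so $\sw(\bx_q,p_q)\le\sum_{i\ge\ell}v_i x_{q,i}$ and summing gives $x+y=\SW(\Z)\le\PassiveSWMax(\bx^*,F)$. Because $p_q$ maximizes revenue over $F$ for $\bx_q$ and $v^*\in F$, we get $\ps(\bx_q,p_q)=R_{\bx_q}(p_q)\ge R_{\bx_q}(v^*)$, and summing over $q$ yields $y=\PS(\Z)\ge R_{\bx^*}(v^*)=R_{\uni}^F(\bx^*)$. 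Finally $p_q\le\max(F)=v_r$, so every buyer of value $v_i>v_r$ transacts with surplus $v_i-p_q\ge v_i-v_r$, hence $x=\CS(\Z)\ge\ActiveCSMin(\bx^*,F)$.

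For the converse, note that a convex combination of two $F$-instructed \schemes --- take the union of the two segmentations after scaling their masses by $\lambda$ and $1-\lambda$ --- is again $F$-instructed (scaling a market's mass does not change its $\arg\max$ over $F$), with $\CS$ and $\PS$ the corresponding convex combinations. Hence it suffices to realize the three vertices $(\ActiveCSMin,\PassiveSWMax-\ActiveCSMin)$, $(\PassiveSWMax-R_{\uni}^F,R_{\uni}^F)$, and $(\ActiveCSMin,R_{\uni}^F)$. The first (seller-optimal) vertex is perfect price discrimination: place each buyer of value $v_i$ in a singleton market, with instructed price $\min(v_i,v_r)$ when $v_i\ge v_\ell$ and any price in $F$ otherwise. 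This price is always optimal within $F$ for a singleton; the consumer surplus is exactly $\ActiveCSMin(\bx^*,F)$; and every buyer of value $\ge v_\ell$ transacts, so $\SW=\PassiveSWMax(\bx^*,F)$ and hence $\PS=\PassiveSWMax-\ActiveCSMin$.

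The other two vertices adapt the BBM decomposition (\cref{alg:bbm}) via a ``capping'' trick. First move all buyers of value $<v_\ell$ into singleton markets with instructed price $v_\ell$ (they contribute $(0,0)$ and are trivially $F$-instructed). On the remaining market form $\hat\bx$ by relocating all mass on values $>v_r$ onto $v_r$; then $\support(\hat\bx)\subseteq F$ and $R_{\hat\bx}(v)=R_{\bx^*}(v)$ for every $v\in F$, so $\optPrice(\hat\bx)=\fopt(\bx^*)\ni v^*$ and $\UniRev(\hat\bx)=R_{\uni}^F(\bx^*)$. Run \cref{alg:bbm} on $\hat\bx$ to obtain equal-revenue segments $\{\hat\bx_q\}$; by \cref{property:bbm}, each satisfies $v^*\in\optPrice(\hat\bx)\subseteq\optPrice(\hat\bx_q)=\support(\hat\bx_q)\subseteq F$, so $R_{\hat\bx_q}(w)=R_{\hat\bx_q}(v^*)$ for every $w\in\support(\hat\bx_q)$. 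For the consumer-optimal vertex instruct price $\min(\support(\hat\bx_q))$ in each segment; for the bottom-left vertex instruct $\max(\support(\hat\bx_q))$. In both cases the price lies in $F$ and is optimal for $\hat\bx_q$, so the \scheme is $F$-instructed, and the total revenue equals $\sum_q R_{\hat\bx_q}(v^*)=R_{\hat\bx}(v^*)=R_{\uni}^F(\bx^*)$. Finally ``uncap'': in each segment redistribute the mass currently on $v_r$ back among the original values $>v_r$ it came from. Since at every price $\le v_r$ the set of transacting buyers is unchanged, uncapping preserves $F$-instructedness and the revenue. With the $\min$-prices every buyer of value $\ge v_\ell$ still transacts, so $\SW=\PassiveSWMax$ and $\CS=\PassiveSWMax-R_{\uni}^F$. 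With the $\max$-prices, in each segment only buyers at the (capped) maximal support value transact; all original $>v_r$ mass lies in segments whose maximal support value is $v_r$, so uncapping raises the consumer surplus from $0$ to exactly $\sum_{v_i>v_r}(v_i-v_r)x_i^*=\ActiveCSMin$, giving $(\ActiveCSMin,R_{\uni}^F)$.

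I expect the crux to be the cap/uncap correspondence: verifying that moving the values above $\max(F)$ down onto $v_r$ keeps $v^*$ an optimal price of $\hat\bx$ (so that every BBM segment carries $v^*$ and the revenue telescopes to precisely $R_{\uni}^F$), and that uncapping keeps every instructed price optimal within $F$ while shifting consumer surplus by exactly the intended amount. Everything else --- the three inequalities, perfect discrimination, and the convex-combination argument --- is routine; and, unlike the passive model, feasibility here is automatic, so no sub-feasibility analysis is needed.
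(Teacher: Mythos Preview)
Your proposal is correct and follows essentially the same approach as the paper: the seller-optimal vertex via full revelation, the other two vertices via capping above $v_r$, running BBM, instructing the min (resp.\ max) support price, and then uncapping proportionally---this is exactly the paper's \cref{alg:Rseller_extend} and its reuse in \cref{sec:Rseller_min}. The only cosmetic differences are that you place the below-$v_\ell$ mass in singleton segments rather than in the first BBM segment, and that you spell out the three necessity inequalities and the convex-combination argument, which the paper leaves implicit.
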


In \cref{sec:Rseller_ps_opt}, we construct a seller-optimal scheme that achieves both $\ActiveCSMin(\bx^*, F)$ in consumer surplus and $\PassiveSWMax(\bx^*, F)$ in social welfare, thereby maximizing producer surplus. In \cref{sec:Rseller_cs_opt}, we construct a buyer-optimal scheme that achieves both $R^F_{\uni}(\bx^*)$ in producer surplus and $\PassiveSWMax(\bx^*, F)$ in social welfare, thus maximizing consumer surplus. Finally, \cref{sec:Rseller_min} presents a scheme that simultaneously achieves the minima of both consumer and producer surplus.

\subsubsection{Producer-Surplus-Maximizing \SCHEME} \label{sec:Rseller_ps_opt}
\begin{proposition}
    There exists an $F$-instructed \scheme $\ZAP$ on $\bx^*$ such that
    \[
    \PS(\ZAP) = \sum_{i=\ell}^r x^*_i \cdot v_i + v_r \cdot \sum_{i=r+1}^n x^*_i, \quad \CS(\ZAP) = \ActiveCSMin(\bx^*, F),
    \]
    and hence
    \[
    \SW(\ZAP) = \PS(\ZAP) + \CS(\ZAP) = \PassiveSWMax(\bx^*, F).
    \]
\end{proposition}

\begin{proof}
    Construct $\ZAP = \{(\bx_q, p_q)\}_{q \in [|F|]}$ with $p_q = v_{\ell + q - 1}$ as follows. Define the market segments:
    \[
    x_{1, i} = \begin{cases}
        x^*_i & \text{if } i \le \ell, \\
        0 & \text{otherwise},
    \end{cases} \qquad
    x_{r - \ell + 1, i} = \begin{cases}
        x^*_i & \text{if } i \ge r, \\
        0 & \text{otherwise},
    \end{cases}
    \]
    and for $q = 2, \ldots, r - \ell$,
    \[
    x_{q,i} = \begin{cases}
        x^*_i & \text{if } i = \ell + q - 1, \\
        0 & \text{otherwise}.
    \end{cases}
    \]
    It is straightforward to verify that $p_q \in \fopt(\bx_q)$ for all $q$, so $\ZAP$ is indeed $F$-instructed. Compute the surpluses:
    \[
    \begin{aligned}
        \ps(\bx_1, p_1) &= x^*_{\ell} \cdot v_{\ell}, &\quad \cs(\bx_1, p_1) &= 0, \\
        \ps(\bx_{r - \ell + 1}, p_{r - \ell + 1}) &= v_r \cdot \sum_{i=r}^n x^*_i, &\quad \cs(\bx_{r - \ell + 1}, p_{r - \ell + 1}) &= \sum_{i=r+1}^n x^*_i \cdot (v_i - v_r), \\
        \ps(\bx_q, p_q) &= v_{\ell + q - 1} \cdot x^*_{\ell + q - 1}, &\quad \cs(\bx_q, p_q) &= 0 \quad \text{for } q = 2, \ldots, r - \ell.
    \end{aligned}
    \]
    Summing over all segments:
    \[
    \begin{aligned}
        \PS(\ZAP) &= \sum_{i=\ell}^r x^*_i \cdot v_i + v_r \cdot \sum_{i=r+1}^n x^*_i, \\
        \CS(\ZAP) &= \sum_{i=r+1}^n x^*_i \cdot (v_i - v_r).
    \end{aligned}
    \]
    This completes the proof.
\end{proof}

\subsubsection{Consumer-Surplus-Maximizing \SCHEME} \label{sec:Rseller_cs_opt}
The consumer-surplus-maximizing \scheme and the social-welfare-minimizing \scheme share a common segmentation structure, differing only in their instructed prices. Given an aggregate market $\bx^*$ and a contiguous price set $F = \{v_{\ell}, \dots, v_r\} \subseteq V$, we construct a modified market $\tbx^* = (\tx^*_1, \tx^*_2, \ldots, \tx^*_n)$ over the value set $V$ as follows:
\[
\tx^*_i = \begin{cases}
    0 & \text{if } i < \ell, \\
    x^*_i & \text{if } \ell \le i < r, \\
    G_{\bx^*}(v_r) & \text{if } i = r, \\
    0 & \text{if } i > r,
\end{cases}
\qquad \forall i \in [n],
\]
where $G_{\bx}(v) = \sum_{i: v_i \ge v} x_i$ denotes the right-tail cumulative mass. That is, we truncate all mass below $F$ and relocate all mass above $F$ to the maximum value in $F$.

We then apply the decomposition method of \citet{bergemann2015limits} (see \cref{eq:construction-bbm}) to $\tbx^*$, yielding a collection $\tX = \{\tbx_q\}_{q \in [Q]}$ of equal-revenue segments. Based on these, we define a segmentation $\{\bx_q\}_{q \in [Q]}$ of the original market $\bx^*$ as:
\begin{equation} \label{eq:active-segmentation}
    x_{1,i} = \begin{cases}
        x^*_i & \text{if } i < \ell, \\
        \tx_{1,i} & \text{if } \ell \le i < r, \\
        \tx_{1,r} \cdot \frac{x^*_i}{G_{\bx^*}(v_r)} & \text{if } i \ge r;
    \end{cases} \qquad
    x_{q,i} = \begin{cases}
        \tx_{q,i} & \text{if } i < r, \\
        \tx_{q,r} \cdot \frac{x^*_i}{G_{\bx^*}(v_r)} & \text{if } i \ge r
    \end{cases} \quad \forall q > 1, \; i \in [n].
\end{equation}
By construction, the segmentation $\{\bx_q\}_{q \in [Q]}$ satisfies the following:
\begin{property} \label{prpt:active}
    For all $q \in [Q]$, we have $\fopt(\bx^*) \subseteq \bx_q$. Moreover, for all $v_i, v_j \in \support(\bx_q) \cap F$, $R_{\bx_q}(v_i) = R_{\bx_q}(v_j)$.
\end{property}

This structure enables the construction of the consumer-surplus-maximizing \scheme $\ZAC$ presented here and the social-welfare-minimizing \scheme $\ZAMin$ described in \cref{sec:Rseller_min}.

\begin{proposition} \label{prop:active-cs}
    There exists an $F$-instructed \scheme $\ZAC$ on $\bx^*$ such that
    \[
    \PS(\ZAC) = R^F_{\uni}(\bx^*), \quad \CS(\ZAC) = \PassiveSWMax(\bx^*, F) - R^F_{\uni}(\bx^*).
    \]
\end{proposition}

\begin{proof}
    Construct $\ZAC = \{(\bx_q, p_q)\}_{q \in [Q]}$ with $\bx_q$ given in \cref{eq:active-segmentation} and
    \[
        p_q = \min(\support(\bx_q) \cap F).
    \]
    By \cref{prpt:active}, we have $p_q \in \fopt(\bx_q)$, and thus $\ZAC$ is an $F$-instructed \scheme on $\bx^*$. Let $p^* \in \fopt(\bx^*)$ be any feasible uniform price. By \cref{prpt:active}, it holds that
    \[
    \PS(\ZAC) = \sum_{q \in [Q]} \ps(\bx_q, p_q) = \sum_{q \in [Q]} \ps(\bx_q, p^*) = \ps(\bx^*, p^*) = R^F_{\uni}(\bx^*).
    \]

    Furthermore, by the choice of $p_q$, all consumers with values at least $v_\ell$ will make purchases, implying that
    \[
    \PS(\ZAC) + \CS(\ZAC) = \PassiveSWMax(\bx^*, F),
    \]
    which completes the proof.
\end{proof}

\subsubsection{Social-Welfare-Minimizing \SCHEME} \label{sec:Rseller_min}
\begin{proposition}
    There exists an $F$-instructed \scheme $\ZAMin$ on $\bx^*$ such that
    \[
    \PS(\ZAMin) = R^F_{\uni}(\bx^*), \quad \CS(\ZAMin) = \ActiveCSMin(\bx^*, F).
    \]
\end{proposition}

\begin{proof}
    Construct $\ZAMin = \{(\bx_q, p_q)\}_{q \in [Q]}$ with $\bx_q$ given in \cref{eq:active-segmentation} and
    \[
        p_q = \max(\support(\bx_q) \cap F).
    \]

    For the producer surplus, similar to \cref{prop:active-cs}, let $p^* \in \fopt(\bx^*)$ be any feasible uniform price. By \cref{prpt:active}, it holds that
    \[
    \PS(\ZAMin) = \sum_{q \in [Q]} \ps(\bx_q, p_q) = \sum_{q \in [Q]} \ps(\bx_q, p^*) = \ps(\bx^*, p^*) = R^F_{\uni}(\bx^*).
    \]

    For the consumer surplus, we calculate $\cs(\bx_q, p_q)$ based on the support of $\bx_q$:
    \begin{enumerate}
        \item \textbf{Case 1:} $\max(\support(\bx_q) \cap F) < v_r$. By construction (\cref{eq:active-segmentation}), $x_{q,i} = 0$ for all $i > r$, so $\cs(\bx_q, p_q) = 0$.
        \item \textbf{Case 2:} $\max(\support(\bx_q) \cap F) = v_r$. Then $p_q = v_r$, and
        \[
        \cs(\bx_q, p_q) = \sum_{i = r+1}^n x_{q,i}(v_i - v_r).
        \]
    \end{enumerate}
    Summing over all $q \in [Q]$, we get:
    \begin{align*}
    \CS(\ZAMin) &= \sum_{q \in [Q]} \cs(\bx_q, p_q) = \sum_{q \in [Q]} \bbI[\max(\support(\bx_q) \cap F) = v_r] \sum_{i = r+1}^n x_{q,i}(v_i - v_r) \\
    &= \sum_{q \in [Q]} \sum_{i = r+1}^n x_{q,i}(v_i - v_r) \tag{since $x_{q,i} = 0$ for $i > r$ in Case 1} \\
    &= \sum_{i=r+1}^n x^*_i (v_i - v_r) = \ActiveCSMin(\bx^*, F).
    \end{align*}
    This completes the proof.
\end{proof}

\section{More Discussions on the Passive Intermediary Model} \label{sect:discussion-passive-model}
\subsection{Feasibility with the Optimal Uniform Price Excluded} \label{sect:justification-feasibility}
In this section, we analyze the cases when the regulated set $F$ does not contain the optimal price but remains $\bx^*$-feasible. 

A natural question on these cases is whether price sets that deviate significantly from the uniform price can be feasible. In contrast to the intuition that significant deviations make regulation infeasible, there are examples where large deviated price sets are feasible. The following example shows that the feasible price set can significantly deviate from the optimal uniform price.
\begin{example} \label{ex:optimal-price-deviate}
    Suppose buyers have three possible values: $1$, $2$, and $100$, with population distribution $(100 / 152, 50 / 152, 2 / 152)$. The seller’s optimal uniform price is $p^* = 100$, yielding a revenue of $200/152$. However, $F = \{1,2\}$ is a feasible price set. The intermediary can split the market into two segments: $(100 / 152, 0, 1 / 152)$ and $(0, 50 / 152, 1 / 152)$.The optimal price is $1$ in the first segment and $2$ in the second. Therefore, $F = \{1,2\}$ is a feasible price set, even though $p^* = 100$ is excluded from $F$. This example illustrates that feasibility is not strictly constrained by proximity to the uniform price, underscoring the broader applicability of our results.
\end{example}

\paragraph{Sufficient Condition.}
We further provide sufficient conditions for sets without optimal uniform price being feasible and then use a simulation with a uniform demand distribution to verify the prevalence of such sets.


\begin{theorem} \label{thm:uni-feasible-sufficient} (Proof see \cref{app:proof_sufficient}) 
    Suppose $v_{i^*}$ is the unique optimal price of $\bx^*$, \textit{i.e.}, $\optPrice(\bx^*) = \{v_{i^*}\}$. Consider a regulated price set $F = \{v_{\ell}, v_{\ell+1}, \dots, v_r\}$ with $v_{i^*} \not\in F$. If \[\sum_{j = \ell}^r x_j^* \cdot v_j + v_\ell \cdot \sum_{j = r + 1}^n x_j^* \ge R_{\uni}(\bx^*),\] then $F$ is $\bx^*$-feasible.
\end{theorem}

\paragraph{Simulation.}
We further perform a simulation to assess how frequently the feasible set $F$ excludes the optimal price $\bx^*$. Specifically, we let $\bx^*$ be uniformly distributed over $V = \{L, L+1, \dots, R\}$, with $R \in \{9, 49, 99, 199\}$ and $L$ varying from $1$ to $R$. We then enumerate all regulated sets that do not contain the optimal price and compute the proportion of such sets that remain feasible. \cref{fig:feasibility} plots this proportion as a function of $L/R$. For comparison, we also apply the sufficient condition from \cref{thm:uni-feasible-sufficient} and include the resulting proportion in the same figure.

\begin{figure}[h]
    \centering
    \includegraphics[width=\linewidth]{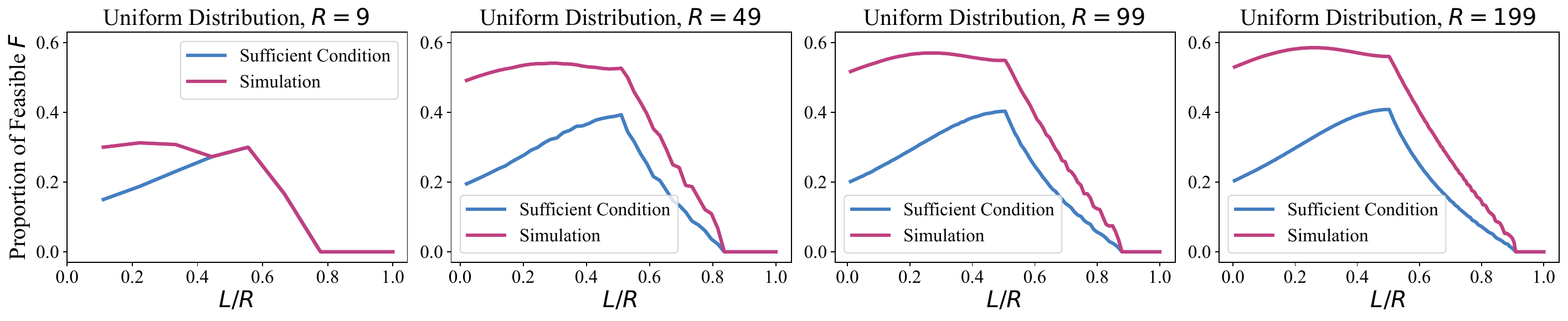}
    \caption{Proportion of feasible sets among all regulated sets that exclude the optimal price. Details are provided in \cref{sect:justification-feasibility}.}
    \label{fig:feasibility}
\end{figure}

There are two observations from \cref{fig:feasibility}. First, feasible sets excluding the optimal price are relatively common, with the proportion exceeding 0.2 in more than 60\% of cases. Second, our sufficient condition serves as a lower bound on the actual proportion, effectively capturing a significant portion of it.

\subsubsection{Proof of \cref{thm:uni-feasible-sufficient}} \label{app:proof_sufficient}
\begin{lemma} \label{lemma:proof-feasibility-optimal}
    Let $\ZPP$ denote the output and let $\bx^{\remain}$ be defined as in \cref{line:bx-remain} of \cref{alg:construction}, when \cref{eq:construction-PS-max} is used in \cref{line:construction-equation}. If $F$ is not $\bx^*$-feasible and $v_{i^*} \in \optPrice(\bx^*)$, then $v_{i^*} \in \optPrice(\bx^{\remain})$.
\end{lemma}
\begin{proof}
    According to \cref{thrm:passive-PS}, when $F$ is not $\bx^*$-feasible, we have $\bx^{\remain} > 0$.

    Assume that $v_{i^*} \notin \optPrice(\bx^{\remain})$. Let $v_j$ be any value in $\optPrice(\bx^{\remain})$. Since $\bx^{\remain} > 0$, we have $x^{\remain}_j > 0$ and $j \ne i^*$. As a result, $R_{\bx^{\remain}}(v_j) > R_{\bx^{\remain}}(v_{i^*})$. In addition, since $x^{\remain}_j > 0$, by the construction of \cref{alg:construction,eq:construction-PS-max}, $v_j \in \support(\bx^{\equal})$ for any constructed $\bx^{\equal}$ in \cref{alg:construction}. Hence, for any $(\bx, p) \in \ZPP$, we must have $v_j \in \optPrice(\bx)$ and hence $R_{\bx}(v_j) \ge R_{\bx}(v_{i^*})$. Therefore,
    $$
    R_{\bx^*}(v_j) = R_{\bx^{\remain}}(v_j) + \sum_{(\bx, p) \in \ZPP}R_{\bx}(v_j) > R_{\bx^{\remain}}(v_{i^*}) + \sum_{(\bx, p) \in \ZPP}R_{\bx}(v_{i^*}) = R_{\bx^*}(v_{i^*}),
    $$
    which leads to a contradiction. Now the claim follows.
\end{proof}

\begin{proof}[Proof of \cref{thm:uni-feasible-sufficient}]
    Suppose that $F$ is not $\bx^*$-feasible. By \cref{lemma:proof-feasibility-optimal}, after running \cref{alg:construction} on $\bx^*$ using \cref{eq:construction-PS-max} in \cref{line:construction-equation}, we obtain $v_{i^*} \in \optPrice(\bx^{\remain})$ and $x_{i^*}^{\remain} > 0$. Therefore,
    \[
    v_{i^*} \cdot \sum_{j = i^*}^n x^\remain_{j} > v_{\ell} \cdot \sum_{j = r + 1}^n x^\remain_{j}.
    \]

    By the construction of \cref{alg:construction}, $\ZPP$ can be written as $\ZPP = \{(\bx_q, p_q)\}_{q \in [|F|]}$ with $p_q = v_{\ell + q - 1}$ and $\support(\bx_q) \cap F = v_{\ell + q - 1}$. Since $\ZPP$ consists of equal-revenue markets and $v_{i^*}$ is in the support of each market, we have 
    \[
        \forall q\in [|F|], v_{i^*} \cdot \sum_{j = i^*} ^ n x_{q, j} \ge v_{\ell + q - 1} \cdot \left(x_{q, \ell + q - 1} + \sum_{j = r + 1} ^ n x_{\ell, q} \right).
    \]
    Summing up the above two inequalities after enumerating $q$, we have 
    \begin{align*}
        v_{i^*} \cdot \sum_{j = i^*} ^ n x^\remain_{j} + v_{i^*} \cdot \sum_{q = 1}^{|F|}  \sum_{j = i^*} ^ n x_{q, j} &\ge \sum_{q = 1} ^ {|F|} v_{\ell + q - 1} \cdot \left(x_{q, \ell + q - 1} + \sum_{j = r + 1} ^ n x_{j, q} \right) + v_{\ell} \cdot \sum_{j = r + 1} ^ n x^\remain_{j}
        \\ 
        & \ge \sum_{q = 1} ^ {|F|} v_{\ell + q - 1} \cdot x_{q, \ell + q - 1} + \sum_{q = 1}^{|F|} v_{\ell} \cdot \sum_{j = r + 1}^n x_{j, q} + v_{\ell} \cdot \sum_{j = r + 1} ^ n x^\remain_{j}
        \\
        & = \sum_{q = 1} ^ {|F|} v_{\ell + q - 1} \cdot x_{q, \ell + q - 1} + v_{\ell} \cdot \left(\sum_{q = 1}^{|F|} \sum_{j = r + 1}^n x_{j, q} +  \sum_{j = r + 1} ^ n x^\remain_{j} \right).
    \end{align*}
    Since we have $\bx^{\remain} + \sum_{q = 1}^{|F|}\bx_q = \bx^*$, the left hand side is equal to $v_{i^*} \cdot \sum_{j = i^*}^n x^*_j = R_{\uni}(\bx^*)$, and the right hand side is equal to $\sum_{j=\ell} ^ {r} v_j \cdot x^*_j + v_{\ell} \cdot \sum_{r + 1}^n x_j^*$. This leads to a contradiction, completing the proof.
\end{proof}

\subsection{Discussion on Applying Bergemann et al. [2015]'s Method} \label{app:bbm}
\cref{ex:intro_four,ex:intro_four_continue} show that when $F$ excludes all optimal uniform prices, applying \citet{bergemann2015limits}'s method does not guarantee a feasible market segmentation. We further examine cases where $F$ includes an optimal uniform price. While the method succeeds in the consumer surplus-maximizing scheme, it fails to capture the full producer-consumer surplus region.

\paragraph{Effectiveness in the Consumer Surplus Maximizing \SCHEME.}  
We show that if $F$ contains an optimal uniform price $p^*$, the segmentation method of \citet*{bergemann2015limits} can be applied to achieve \cref{thm:informal-main}.

\begin{theorem}
    When $F = \{v_{\ell}, v_{\ell+1}, \dots, v_r\}$ intersects with the optimal uniform price set, i.e., $F \cap \optPrice(\bx^*) \ne \emptyset$, \cref{thm:informal-main} follows directly from applying \citet{bergemann2015limits}'s method.
\end{theorem}

\begin{proof}
    Let $\X = \{\bx_q\}_{q \in [Q]}$ be the output of \citet{bergemann2015limits}'s method (\cref{eq:construction-bbm}). Since $F \cap \optPrice(\bx^*) \ne \emptyset$, suppose $p^* \in F \cap \optPrice(\bx^*)$. By construction, for any $\bx_q \in \X$, we have $\optPrice(\bx^*) \subseteq \optPrice(\bx_q) = \support(\bx_q)$. Consequently, it follows that
    $$
    \support(\bx_q) \cap F = \optPrice(\bx_q) \cap F \supseteq \optPrice(\bx^*) \cap F \ne \emptyset.
    $$
    For any $q \in [Q]$, choose $p_q$ as $p_q = \min(\support(\bx_q) \cap F)$. As a result, it holds that
    $$
    \begin{aligned}
        \cs(\bx_q, p_q) & = \sum_{i=1}^n \bbI[v_i \ge p_q](v_i-p_q)x_{q,i} = \sum_{i=\ell}^n (v_i-p_q)x_{q,i} \\
        & = \sum_{i=\ell}^n v_ix_{q,i} - R_{\bx_q}(p_q) = \sum_{i=\ell}^n v_ix_{q,i} - R_{\bx_q}(p^*) \\
        \ps(\bx_q, p_q) & = R_{\bx_q}(p_q) = R_{\bx_q}(p^*).
    \end{aligned}
    $$
    Let $\Z = \{(\bx_q, p_q)\}_{q \in [Q]}$. As a result, it holds that
    $$
    \begin{aligned}
        \CS(\Z) & = \sum_{q=1}^Q\cs(\bx_q, p_q) = \sum_{q=1}^Q\sum_{i=\ell}^n \left(v_ix_{q,i} - R_{\bx_q}(p^*)\right) \\
        & = \sum_{i=\ell}^n v_i\cdot x_i^* - R_{\bx^*}(p^*) = \sum_{i=\ell}^n v_i\cdot x_i^* - \UniRev(\bx^*) \\
        \PS(\Z) & = \sum_{q=1}^Q\ps(\bx_q, p_q) = \sum_{q=1}^QR_{\bx_q}(p^*) = R_{\bx^*}(p^*) = \UniRev(\bx^*).
    \end{aligned}
    $$
    Now the claim follows.
\end{proof}

\paragraph{Limitation in Capturing the Full Consumer-Producer Surplus Region.}  
However, even if $F$ contains an optimal uniform price $p^*$, directly applying \citet{bergemann2015limits}'s method does not fully capture the entire producer-consumer surplus region.

Specifically, consider the social welfare minimizing \scheme, where a direct application of \citet{bergemann2015limits}'s method proceeds as follows. Let $\X = \{\bx_q\}_{q=1}^Q$ be the output of the market segmentation by \cref{eq:construction-bbm}, and define $p_q = \max(\support(\bx_q) \cap F)$ (noting that $\support(\bx_q) \cap F$ is non-empty by the previous proof). However, the following example demonstrates that the scheme $\Z = \{(\bx_q, p_q)\}_{q=1}^Q$ may fail to achieve the minimal possible social welfare.

\begin{example}
    Suppose $V = \{1, 2, 3, 4\}$ and $\bx^* = (0.64, 0.08, 0.04, 0.24)$. Let $F = \{1, 2, 3\}$. The unique optimal uniform price is $p^* = 1 \in F$. Applying \citet{bergemann2015limits}'s method could obtain the following market segments $\bx_1 = (0.24, 0.08, 0.04, 0.12)$, $\bx_2 = (0.36, 0, 0, 0.12)$, and $\bx_3 = (0.04, 0, 0, 0)$. Using the above method, it holds that $p_1 = 3$, $p_2 = 1$, and $p_3 = 1$. The corresponding scheme $\Z = \{(\bx_q, p_q)\}_{q=1}^3$ has social welfare
    $$
    \SW(\Z) = \sum_{q=1}^3\sw(\bx_q, p_q) = 0.6 + 0.84 + 0.04 = 1.48.
    $$
    However, consider the alternative \scheme $\Z' = \{(\bx'_q, p'_q)\}_{q = 1}^3$, where $\bx'_1 = (0.32, 0, 0.04, 0.12)$, $\bx'_2 = (0.16, 0.08, 0, 0.08)$, and $\bx'_3 = (0.16, 0, 0, 0.04)$, with corresponding prices $p'_1 = 3$, $p'_2 = 2$, and $p'_3 = 1$. It follows that
    $$
    \SW(\Z') = \sum_{q=1}^3\sw(\bx'_q, p'_q) = 0.6 + 0.48 + 0.32 = 1.4 < \SW(\Z).
    $$
    As a result, applying \citet{bergemann2015limits}'s method fail to find the social welfare minimizing \scheme.
\end{example}
\subsection{Maximizing the Minimum Consumer Surplus from the Regulator's Perspective} \label{app:final_discussion}
When the regulator aims to set $F$ given the market information, a natural question is which choice of $F$ brings the largest consumer surplus guarantee, \emph{regardless of} how the segmentation is designed by the intermediary. Observing that the left boundary of the CS-PS feasible region is a vertical line, the question is equivalent to find $F$ that maximizes $\PassiveCSMin(\bx^*, F)$. Based on the characterations in \cref{thrm:passive-SW-min}, the following corollary addresses this question. 

\begin{corollary} \label{coro:f-minimal-consuer-surplus}
    Given the market $\bx^*$ with the value set $V$. The regulation set $F$ that maximizes $\PassiveCSMin(\bx^*, F)$ is $\{v_1, v_2, \dots, v_\omega\}$, where $v_{\omega}$ is the smallest index such that $\{v_1, v_2, \dots, v_\omega\}$ is $x^*$-feasible. 
\end{corollary}
\begin{proof}
    By \cref{thrm:passive-SW-min}, $\PassiveCSMin(\bx^*, F) = \eta_0 v_{i_0} + \sum_{j=i_0+1}^n x^*_jv_j - \UniRev(\bx^*)$, where $i_0$ and $\eta_0$ are defined by \cref{eq:passive-SW-min-i0-eta0}. Since $\UniRev(\bx^*)$ is fixed, we want to maximize $\eta_0 v_{i_0} + \sum_{j=i_0+1}^n x^*_jv_j$, which is equivalent to select the largest minimal sub-feasible price set.     It can be verified that $\PassiveCSMin(\bx^*, F)$ only depends on $v_r$ as long as $F$ is $\bx^*$-feasible. Recall that
    \begin{align*}
        i_0 & = \max\{i \in \{\ell, \ell+1, \dots, r\}: \bx^* \text{ is } \{v_{i_0}, v_{i_0+1}, \dots, v_r\}\text{-feasible}\}. \\
        \eta_0 & =  \inf\{0 \le \eta \le x^*_{i_0}: (F, i_0, \eta) \text{ is } \bx^*\text{-sub-feasible}\}.
    \end{align*}
    Imagine a process where we fix $v_\ell$ in $F$ and move $v_r$ from $v_n$ to $v_1$. When $v_r$ becomes smaller, $i_0$ also becomes smaller. Moreover, by the definition of $\eta_0$, if $(F, i_0, \eta_0)$ is $\bx^*$-sub-feasible and $\eta_0 < \eta_0'$, then $(F, i_0, \eta_0')$ is also $\bx^*$-sub-feasible. Therefore, $\eta_0$ will only monotonically increase when $i_0$ is fixed. Therefore, we have the value $\eta_0 v_{i_0} + \sum_{j=i_0+1}^n x^*_jv_j$ is monotonically increasing as we $v_r$ becomes smaller. 
    
    Note that if $F$ is $\bx^*$-feasible and $F \subseteq F' \subseteq V$, then $F'$ must be $\bx^*$-feasible. Therefore, the smallest right end value $v_r$ while $F$ being $\bx^*$-feasible is taken when the left end $v_\ell$ is the smallest value in $V$. Consequently, if we find the smallest $v_\omega$ such that $\{v_1, v_2, \dots, v_\omega\}$ is feasible, $v_\omega$ is the smallest right-end value among all $\bx^*$-feasible price sets. Combining with the argument in the last paragraph completes the proof of the optimality of setting $F = \{v_1, v_2, \dots, v_\omega\}$.
\end{proof}

We present an example demonstrating that $p^*$ may not be included in the optimal (with the largest consumer surplus guarantee) regulated price set $F$ derived from \cref{coro:f-minimal-consuer-surplus}.

\begin{example}
    Let $n = 99$ and consider a uniform demand distribution over $V = \{v_1, v_2, \dots, v_n\} = \{1, 2, \dots, 99\}$, resulting in $x^*_i = 1 / 99$ for all $i$. The optimal uniform price is $p^* = 50$. Through simulation, we verify that the regulated price set $F$ obtained via \cref{coro:f-minimal-consuer-surplus} is $F = \{1, 2, \dots, 33\}$, with $p^* \not\in F$.
\end{example}
\subsection{Alternative Enumeration Methods in \cref{sect:passive-ps-max}} \label{sect:justification}
\begin{example}
    Let $V = \{1, 2, 3, 4\}$, $\bx^* = (9, 1, 1, 3)$, and $F = \{2, 3\}$. In our paper, we first calculate the equal-revenue markets supported on $\{1, 3, 4\}$ and then on $\{1, 2, 4\}$. The result is two market segments $\bx_2 = (8, 0, 1, 3)$ and $\bx_1 = (1, 1, 0, 0)$. It is easy to verify that this is feasible. However, if we first construct an equal-revenue market supported on $\{1, 2, 4\}$, the market segment will be $(2, 1, 0, 1)$ and the remaining market is $(7, 0, 1, 3)$, whose optimal price is $1 \not\in F$.
\end{example}

We can demonstrate that in the above example, the \emph{only} feasible enumeration method to achieve market segmentation in the standard form (defined in \cref{defn:standard-form}) is precisely the outcome of our method.

\begin{proposition}
    Let $V = \{1, 2, 3, 4\}$, $\bx^* = (9, 1, 1, 3)$, and $F = \{2, 3\}$. The only standard-form and \Fvalid \scheme of $\bx^*$ is $\Z = \{(\bx_1, p_1), (\bx_2, p_2)\}$ with
    $$
    \bx_1 = (1, 1, 0, 0), \quad p_1 = 2, \quad \bx_2 = (8, 0, 1, 3), \quad p_2 = 3.
    $$
\end{proposition}
\begin{proof}
    Suppose the standard-form market scheme is $\Z = \{(\bx_1, p_1), (\bx_2, p_2)\}$ with $p_1 = 2$ and $p_2 = 3$. Assume that $\bx_1 = (a, b, c, d)$ and then we have $\bx_2 = (9 - a, 1 - b, 1 - c, 3 - d)$. We have
    $$
        0 \le a \le 9, 0 \le b \le 1, 0 \le c \le 1, 0 \le d \le 3.
    $$
    Since $p_1 = 2 \in \optPrice(\bx_1)$ and $p_2 = 3 \in \optPrice(\bx_2)$, we have
    \begin{align}
        2(b+c+d) = R_{\bx_1}(2) \ge R_{\bx_1}(1) = a + b + c + d & \Rightarrow b + c + d \ge a. \label{eq:opt-1} \\
        3(4 - c - d) = R_{\bx_2}(3) \ge R_{\bx_2}(1) = 14 - a - b - c - d & \Rightarrow a + b \ge 2c + 2d + 2. \label{eq:opt-2}
    \end{align}
    \cref{eq:opt-1} $+$ \cref{eq:opt-2}, we have
    $$
        b+c+d + a + b \ge a+2c+2d+2 \Rightarrow 2b \ge c + d + 2.
    $$
    Since $0 \le b \le 1$ and $c, d \ge 0$, we have
    $$
    2 \ge 2b \ge c + d + 2 \ge 2.
    $$
    As a result, we have $b = 1$, $c = 0$, and $d = 0$. Now \cref{eq:opt-1,eq:opt-2} become $1 \ge a$ and $a \ge 1$. As a result, $a = 1$ and the claim follows.
\end{proof}

\end{document}